\def\conv{*}
\def\P{\mathbb{P}}
\def\eps{\varepsilon}
\begin{document}

\begin{frontmatter}



\newtheorem{theorem}{Theorem}
\newtheorem{proposition}{Proposition}
\newtheorem{lemma}{Lemma}
\newtheorem{corollary}{Corollary}
\newtheorem{proof}{Proof}

\title{Characterizing the Impact of the Workload on \\the Value of Dynamic Resizing in Data Centers\tnoteref{t1}}
\tnotetext[t1]{This technical report is the full version of an accepted poster at ACM Sigmetrics/Performance 2012~\cite{Wang2012} and an accepted paper at the IEEE Infocom 2013 Mini Conference~\cite{Wang2013}}.


\author[1]{Kai Wang\corref{cor1}}
\ead{wangkai@iscas.ac.cn}
\author[2]{Minghong~Lin}
\ead{mhlin@caltech.edu}
\author[3]{Florin~Ciucu}
\ead{florin@dcs.warwick.ac.uk}
\author[4]{Adam~Wierman}
\ead{adamw@caltech.edu}
\author[5]{Chuang~Lin}
\ead{chlin@tsinghua.edu.cn}


\address[1]{Institute of Software, Chinese Academy of Sciences, China}
\address[2]{Facebook, USA}
\address[3]{Computer Science Department, University of Warwick, UK} 
\address[4]{Dept. of Computing and Mathematical Sciences, California Institute of Technology, USA}
\address[5]{Dept. of Computer Science and Technology, Tsinghua University, China}

\begin{abstract}
Energy consumption imposes a significant cost for data centers; yet much of that energy is used to maintain excess service capacity during periods of predictably low load. Resultantly, there has recently been interest in developing designs that allow the service capacity to be dynamically resized to match the current workload.  However, there is still much debate about the value of such approaches in real settings.  In this paper, we show that the value of dynamic resizing is highly dependent on statistics of the workload process.  In particular, both slow time-scale non-stationarities of the workload (e.g., the peak-to-mean ratio) and the fast time-scale stochasticity (e.g., the burstiness of arrivals) play key roles.  To illustrate the impact of these factors, we combine optimization-based modeling of the slow time-scale with stochastic modeling of the fast time scale.  Within this framework, we provide both analytic and numerical results characterizing when dynamic resizing does (and does not) provide benefits.
\end{abstract}

\begin{keyword}
Data Centers \sep Dynamic Resizing \sep Energy Efficient IT \sep Stochastic Network Calculus
\end{keyword}

\end{frontmatter}


\section{Introduction}

Energy costs represent a significant, and growing, fraction of a data center's budget. Hence there is a push to improve the energy efficiency of data centers, both in terms of the components (servers, disks, network, power infrastructure \cite{MGW09,BH07,Carrera03,Diniz07,Govindan2009}) and the algorithms \cite{A10,Chen08,Chase01,LinWAT11,Lu2013}.  One specific aspect of data center design that is the focus of this paper is dynamically resizing the service capacity of the data center so that during periods of low load some servers are allowed to enter a power-saving mode (e.g., go to sleep or shut down).

The potential benefits of dynamic resizing have been a point of debate in the community \cite{Greenberg08,Chen08,Meisner11,Jin2011,Zhang2012}. On one hand, it is clear that, because data centers are far from perfectly energy proportional, significant energy is used to maintain excess capacity during periods of predictably low load when there is a diurnal workload with a high peak-to-mean ratio.  On the other hand, there are also significant costs to dynamically adjusting the number of active servers.  These costs come in terms of the engineering challenges in making this possible \cite{1251223,Thereska09,ACGGKS10}, as well as the latency, energy, and wear-and-tear costs of the actual ``switching'' operations involved \cite{BACFJP08,Chen08,Gandhi11}.

The challenges for dynamic resizing highlighted above have been the subject of significant research.  At this point, many of the engineering challenges associated with facilitating dynamic resizing have been resolved, e.g., \cite{1251223,Thereska09,ACGGKS10}.  Additionally, the algorithmic challenge of deciding, without knowledge of the future workload, whether to incur the significant ``switching costs'' associated with changing the available service capacity has been studied in depth and a number of promising algorithms have emerged \cite{LinWAT11,Ahmad10,Chen08,Gandhi10,Lu2013}.

However, despite this body of work, the question of characterizing the potential benefits of dynamic resizing has still not been properly addressed.  Providing new insight into this topic is the goal of the current paper.

The perspective of this paper is that, apart from engineering challenges, the key determinant of whether dynamic resizing is valuable is the workload, and that proponents on different sides tend to have different assumptions in this regard.  In particular, a key observation, which is the starting point for our work, is that there are two factors of the workload which provide dynamic resizing potential savings:
\begin{enumerate}[(i)]
\item Non-stationarities at a slow time-scale, e.g., diurnal workload variations.
\item Stochastic variability at a fast time-scale, e.g., the burstiness of request arrivals.
\end{enumerate}
The goal of this work is to investigate the impact of and interaction between these two features with respect to dynamic resizing.

To this point, \emph{we are not aware of any work characterizing the benefits of dynamic resizing that captures both of these features}.  There is one body of literature which provides algorithms that take advantage of (i), e.g., \cite{Chen08,Chase01,LinWAT11,Ahmad10,Chen2010,Verma2009,Zhang2012}. This work tends to use an optimization-based approach to develop dynamic resizing algorithms. There is another body of literature which provides algorithms that take advantage of (ii), e.g., \cite{Gandhi10,Gandhi11}.  This work tends to assume a stationary queueing model with Poisson arrivals to develop dynamic resizing algorithms.

The first contribution of the current paper is to provide an analytic framework that captures both effects (i) and (ii).  We accomplish this by using an optimization framework at the slow time-scale (see Section \ref{s.slow}), which is similar to that of \cite{LinWAT11}, and combining this with stochastic network calculus and large deviations modeling for the fast time-scale (see Section \ref{s.fast}), which allows us to study a wide variety of underlying arrival processes.  We consider
both light-tailed models with various degrees of burstiness and heavy-tailed models that exhibit self-similarity.  The interface between the fast and slow time-scale models happens through a constraint in the optimization problem that captures the Service Level Agreement (SLA) for the data center, which is used by the slow time-scale model but calculated using the fast time-scale model (see Section \ref{s.fast}).

Using this modeling framework, we are able to provide both analytic and numerical results that yield new insight into the potential benefits of dynamic resizing (see Section \ref{s.results}).  Specifically, we use trace-driven numerical simulations to study (i) the role of burstiness for dynamic resizing, (ii) the role of the peak-to-mean ratio for dynamic resizing, (iii) the role of the SLA for dynamic resizing, and (iv) the interaction between (i), (ii), and (iii).  The key realization is that each of these parameters are extremely important for determining the value of dynamic resizing.  In particular, for any fixed choices of two of these parameters, the third can be chosen so that dynamic resizing does or does not provide significant cost savings for the data center. Thus, performing a detailed study of the interaction of these factors is important.  To that end, Figures \ref{fig:whenvaluablebysavings}-\ref{fig:whenvaluablebyepsilon} provide concrete illustrations of which settings of peak-to-mean ratio, burstiness, and SLAs dynamic resizing is and is not valuable.  Hence, debate about the \emph{potential} value of dynamic resizing can be transformed into debate about characteristics of the workload and the SLA.

There are some interesting facts about these parameters individually that our case studies uncover.  Two important examples are the following.  First, while one might expect that increased burstiness provides increased opportunities for dynamic resizing, it turns out the burstiness at the fast time-scale actually reduces the potential cost savings achievable via dynamic resizing.  The reason is that dynamic resizing necessarily happens at the slow time-scale, and so the increased burstiness at the fast time-scale actually results in the SLA constraint requiring \emph{more} servers be used at the slow time-scale due to the possibility of a large burst occurring.  Second, it turns out the impact of the SLA can be quite different depending on whether the arrival process is heavy- or light-tailed.  In particular, as the SLA becomes more strict, the cost savings possible via dynamic resizing under heavy-tailed arrivals decreases quickly; however, the cost savings possible via dynamic resizing under light-tailed workloads is unchanged.

In addition to detailed case studies, we provide analytic results that support many of the insights provided by the numerics.  In particular, Theorems \ref{t.ppscaling} and \ref{t.htscaling} provide monotonicity and scaling results for dynamic resizing in the case of Poisson arrivals and heavy-tailed, self-similar arrivals.

The remainder of the paper is organized as follows. The model is introduced in Sections \ref{s.slow} and \ref{s.fast}, where Section \ref{s.slow} introduces the optimization model of the slow time-scale and Section \ref{s.fast} introduces the model of the fast time-scale and analyzes the impact different arrival models have on the SLA constraint of the dynamic resizing algorithm used in the slow time-scale. Then, Section \ref{s.results} provides case studies and analytic results characterizing the impact of the workload on the benefits of dynamic resizing. The related proofs are presented in Section \ref{s.proofs}. 
Finally, Section \ref{s.conclusion} provides concluding remarks.

\section{Slow Time-scale Model}
\label{s.slow}

In this section and the one that follows, we introduce our model.
We start with the ``slow time-scale model''. This model is meant
to capture what is happening at the time-scale of the data center
control decisions, i.e., at the time-scale which the data center
is willing to adjust its service capacity.  For many reasons, this
is a much slower time-scale than the time-scale at which requests
arrive to the data center.  We provide a model for this ``fast
time-scale" in the next section.

The slow time-scale model parallels closely the model studied
in~\cite{LinWAT11}. The only significant change is to add a
constraint capturing the SLA to the cost optimization solved by
the data center. This is a key change, which allows an interface
to the fast time-scale model.

\subsection{The Workload}

At this time-scale, our goal is to provide a model which can
capture the impact of diurnal non-stationarities in the workload.
To this end, we consider a discrete-time model such that there is
a time interval of interest which is evenly divided into ``frames"
$k \in \left\{1, ..., K \right\}$. In practice, the length of a
frame could be on the order of $5$-$10$ minutes, whereas the time
interval of interest could be as long as a month/year. The mean
request arrival rate to the data center in frame $k$ is denoted by
$\lambda_{k}$, and non-stationarities are captured by allowing
different rates during different frames. Although we could allow
$\lambda_{k}$ to have a vector value to represent more than one
type of workload as long as the resulting cost function is convex
in our model, we assume $\lambda_{k}$ to have a scalar value in
this paper to simplify the presentation.  Because the request
inter-arrival times are much shorter than the frame length,
typically in the order of $1$-$10$ seconds, capacity provisioning
can be based on the average arrival rate during a frame.

\subsection{The Data Center Cost Model}

The model for data center costs focuses on the server costs of the
data center, as minimizing server energy consumption also reduces
cooling and power distribution costs. We model the cost of a
server by the operating costs incurred by an active server, as
well as the switching cost incurred to toggle a server into and
out of a power-saving model (e.g., off/on or sleeping/waking).
Both components can be assumed to include energy cost, delay cost,
and wear-and-tear cost. The model framework we adopt is fairly standard and has been used in a number of previous papers, e.g., see~\cite{LinWAT11, Lin12} for a further discussion of the work and~\cite{Liu2012} for a discussion of how the model relates to implementation challenges.


Note that this model ignores many issues surrounding reliability
and availability, which are key components of data center service
level agreements (SLAs). In practice, a solution that toggles
servers must still maintain the reliability and availability
guarantees; however this is beyond the scope of the current paper.
See \cite{Thereska09} for a discussion.

\subsubsection*{The Operating Cost}

The operating costs are modeled by a convex function
$f(\lambda_{i,k})$, which is the same for all the servers, where
$\lambda_{i,k}$ denotes the average arrival rate to server $i$
during frame $k$. The convexity assumption is quite general and
captures many common server models. One example, which we consider
in our numeric examples later, is to say that the operating costs
are simply equal to the energy cost of the server, i.e., the
energy cost of an active server handling arrival rate
$\lambda_{i,k}$. This cost is often modeled using an affine
function as follows
\begin{equation}
f(\lambda_{i,k})=e_{0}+e_{1} \lambda_{i,k}~,
\label{e.operatingcostmodel}
\end{equation}
where $e_{0}$ and $e_{1}$ are constants \cite{spec,BH07,Kusic08}.  Note that
when servers use dynamic speed scaling, if the energy cost is
modeled as polynomial in the chosen speed, the cost $f(\cdot)$
remains convex.  In practice, we expect that $f(\cdot)$ will be
empirically measured by observing the system over time.

\subsubsection*{The Switching Cost}

The switching cost, denoted by $\beta$, models the cost of
toggling a server back-and-forth between active and power-saving
models. The switching cost includes the costs of the energy used
toggling a server, the delay in migrating connections/data when
toggling a server, and the increased wear-and-tear on the servers
toggling.

\subsection{The Data Center Optimization}

Given the cost model above, the data center has two control
decisions at each time: determining $n_{k}$, the number of active
servers in every time frame, and assigning arriving jobs to servers,
i.e., determining $\lambda_{i,k}$ such that $\Sigma_{i=1}^{n_{k}}
\lambda_{i,k}=\lambda_{k}$. All servers are assumed to be
homogeneous with constant rate capacity $\mu>0$. Modeling heterogeneous
servers is possible but the online problem will become more complicated \cite{Lin12} ,
which is out of the scope of this paper.


The goal of the data center is to determine $n_{k}$ and
$\lambda_{i,k}$ to minimize the cost incurred during $[0,K]$,
which is modeled as follows:

\begin{equation}\label{opt}
\min  \sum\limits_{k=1}^{K} \sum\limits_{i=1}^{n_{k}}
f(\lambda_{i,k})+\beta \sum\limits_{k=1}^{K} (n_{k}-n_{k-1})^{+}
\end{equation}
\begin{equation}\label{s.t.}
s.t. \left\{\
  \begin{aligned}
  0 \leq \lambda_{i,k} \leq \lambda_{k}\\
 \Sigma_{i=1}^{n_{k}} \lambda_{i,k}=\lambda_{k}\\
  \mathbb{P} (D_k>\bar{D}) \leq \bar{\varepsilon}~,
  \end{aligned}
\right.
\end{equation}
where the final constraint is introduced to capture the SLA of the
data center. We use $D_k$ to represent the steady-state delay
during frame $k$, and $(\bar{D},\bar{\epsilon})$ to represent an
SLA of the form ``the probability of a delay larger than $\bar{D}$
must be bounded by probability $\bar{\varepsilon}$".

This model generalizes the data center optimization problem
from~\cite{LinWAT11} by accounting for the additional SLA
constraint. The specific values in this constraint are determined
by the stochastic variability at the fast time-scale.
In particular, we derive (for a variety of workload models) a
sufficient constraint $n_k\geq\frac{C_k(\bar{D},\bar{\eps})}{\mu}$
such that
\begin{equation}
n_k\geq\frac{C_k(\bar{D},\bar{\eps})}{\mu} \Longrightarrow
\mathbb{P} (D_k>\bar{D}) \leq
\bar{\varepsilon}~.\label{eq:suffCond}
\end{equation}
Here, $\mu$ is the constant rate capacity of each server and
$C_k(\bar{D},\bar{\eps})$ is to be determined for each considered
arrival model. One should interpret $C_k(\bar{D},\bar{\eps})$ as
the overall effective capacity/bandwidth needed in the data center
such that the SLA delay constraint is satisfied within frame $k$.

Note that the new constraint is only sufficient for the original
SLA constraint. The reason is that $C_k(\bar{D},\bar{\eps})$ will
be computed, in the next section, from upper bounds on the
distribution of the transient delay within a frame. 

With the new constraint, however, the optimization problem in
(\ref{opt})-(\ref{s.t.}) can be considerably simplified. Indeed,
note that $n_{k}$ is fixed during each time frame $k$ and the
remaining optimization for $\lambda_{i,k}$ is convex. Thus, we can
simplify the form of the optimization problem by using the fact
that the optimal dispatching strategy $\lambda_{i,k}^{*}$ is load
balancing, i.e., $\lambda_{1,k}^{*}= \lambda_{2,k}^{*}=\ldots=
\lambda_{k}/n_{k}$. This decouples dispatching $\lambda_{i,k}^{*}$
from capacity planning $n_{k}$, and so
Eqs.~(\ref{opt})-(\ref{s.t.}) become:
\begin{eqnarray}
&&\ \ \ \underline{\text{Data Center Optimization Problem}}\nonumber\\
&&\min  \sum\limits_{k=1}^{K} n_{k} f(\lambda_{k}/n_{k})+\beta \sum\limits_{k=1}^{K} (n_{k}-n_{k-1})^{+} \label{opts} \\
&&\ \ \ \text{s.t. } n_k\geq\frac{C_k(\bar{D},\bar{\eps})}{\mu}~.
\nonumber
\end{eqnarray}

Note that (\ref{opts}) is a convex optimization, since $n_{k}
f(\lambda_{k}/n_{k})$ is the perspective function of the convex
function $f(\cdot)$.

As we have already pointed out, the key difference between the
optimization above, and that of~\cite{LinWAT11}, is the SLA
constraint. However, this constraint plays a key role in the
current paper. It is this constraint that provides a bridge
between the slow time-scale and fast time-scale models.
Specifically, the fast time-scale model uses large deviations and
stochastic network calculus techniques to calculate $C_k(\bar{D},\bar{\eps})$.

\subsection{Algorithms for Dynamic Resizing}

Though the Data Center Optimization Problem described above is
convex, in practice it must be solved \emph{online}, i.e., without
knowledge of the future workload.  Thus, in determining $n_k$, the
algorithm may not have access to the future arrival rates
$\lambda_l$ for $l>k$. This fact makes developing algorithms for
dynamic resizing challenging.  However, progress has been made
recently~\cite{LinWAT11,LWAT11}.

Deriving algorithms for this problem is not the goal of the
current paper. Thus, we make use of a recent algorithm called Lazy
Capacity Provisioning (LCP) \cite{LinWAT11}.  We choose LCP
because of the strong analytic performance guarantees it provides
-- LCP provides cost within a factor of 3 of optimal for any (even
adversarial) workload process.

LCP works as follows. Let $(n_{k, 1}^L,\ldots,n_{k,k}^L)$ be the
solution vector to the following optimization problem
\begin{eqnarray*}
&&\min  \sum\limits_{l=1}^{k} n_{l} f(\lambda_{l}/n_{l})+\beta  \sum\limits_{l=1}^{k} (n_{l}-n_{l-1})^{+}\\
&&\ \ \ \text{s.t. }
n_l\geq\frac{C_l(\bar{D},\bar{\eps})}{\mu}~,~n_0=0~.
\end{eqnarray*}
Similarly, let $(n_{k, 1}^U,\ldots,n_{k,k}^U)$ be the solution
vector to the following optimization problem
\begin{eqnarray*}
&&\min  \sum\limits_{l=1}^{k} n_{l} f(\lambda_{l}/n_{l})+\beta  \sum\limits_{l=1}^{k} (n_{l-1}-n_{l})^{+}\\
&&\ \ \ \text{s.t. }
n_l\geq\frac{C_l(\bar{D},\bar{\eps})}{\mu}~,~n_0=0~.
\end{eqnarray*}
Denote  $(n)_a^b = \max(\min(n,b),a)$ as the projection of $n$
into the closed interval $[a,b]$. Then LCP can be defined using
$n_{k,k}^L$ and $n_{k,k}^U$ as follows.  Informally, LCP stays ``lazily'' between the upper bound $n_{k,k}^U$ and the lower bound $n_{k,k}^L$ in all frames.

\begin{center} \underline{Lazy Capacity Provisioning, LCP} \end{center}
\noindent \emph{Let $n^{LCP}=(n_0^{LCP},\ldots,n_K^{LCP})$ denote the vector of
active servers under LCP.  This vector can be calculated online using the
following forward recurrence relation:}
\begin{eqnarray*}
n_k^{LCP} = \left\{
          \begin{array}{ll}
            0, & \text{$k \leq 0$} \\
            (n_{k-1}^{LCP})^{n^U_{k,k}}_{n^L_{k,k}}, & \text{$1\le k \le K$}~.
          \end{array}
        \right.
\end{eqnarray*}

Note that, in \cite{LinWAT11}, LCP is introduced and analyzed for
the optimization from Eq.~(\ref{opts}) without the SLA constraint.
However, it is easy to see that the algorithm and performance
guarantee extend to our setting. Specifically, the guarantees on
LCP hold in our setting because the SLA constraint can be removed
by defining the operating cost to be $\infty$ instead of $n_{k}
f(\lambda_{k}/n_{k})$ when $n_k<C_k(\bar{D},\bar{\eps})/\mu$.

A last point to highlight about LCP is that, as described, it does not use any predictions about the workload in future frames.  Such information could clearly be beneficial, and can be incorporated into LCP if desired, see \cite{LinWAT11}.

\section{Fast Time-scale Model}
\label{s.fast}

Given the model of the slow time-scale in the previous section, we
now zoom in to give a description for the fast time-scale model.
By ``fast'' time-scale, we mean the time-scale at which requests
arrive, as opposed to the ``slow'' time-scale at which dynamic
resizing decisions are made by the data center. To model the fast
time-scale, we evenly break each frame from the slow time-scale
into ``slots'' $t\in\{1,\ldots,U\}$, such that
$\textrm{frame\_length}=U\cdot\textrm{slot\_length}$.

We consider a variety of models for the workload process at this
fast time-scale, including both light-tailed models with various
degrees of burstiness, as well as heavy-tailed models that exhibit
self-similarity. In all cases, our assumption is that the workload
is stationary over the slots that make up each time frame.

The goal of this section is to derive the value of
$C_k(\bar{D},\bar{\eps})$ in the constraint $n_k\geq
\frac{C_k(\bar{D},\bar{\eps})}{\mu}$ from Eq.~(\ref{eq:suffCond}),
and thus enable an interface between the fast and slow time-scales
by parameterizing the Data Center Optimization Problem from
Eq.~(\ref{opts}) for a broad range of workloads.

Note that throughout this section we suppress frame's subscript
$k$ for $n_k$, $\lambda_k$, $C_k$, and $D_k$, and focus on a
generic frame.

Our approach for deriving the SLA constraint for the Data Center
Optimization Problem will be to first derive an ``aggregation
property" which allows the data center to be modeled as a single
server, and to then derive bounds on the distribution of the
transient delay under a variety of arrival processes.

\subsection{An Aggregation Property}

Note that, if the arrival process were modeled as Poisson and job
sizes were exponential, then an ``aggregation property'' would be
immediate, since the response time distribution only depends on
the load. Hence the SLA could be derived by considering a single
server.  Outside of this simple case, however, we need to derive a
suitable single server approximation.

The aggregation result that we derive and apply is formulated in
the framework of stochastic network calculus~\cite{Book-Chang},
and so we begin by briefly introducing this framework.

Denote the cumulative arrival (workload) process at the data
center's dispatcher by $A(t)$. That is, for each slot
$t=1,\dots,U$, $A(t)$ counts the total number of jobs arrived in
the time interval $[0,t]$. Depending on the total number $n$ of
active servers, the arrival process is dispatched into the
sub-arrival processes $A_i(t)$ with $i=1,\dots,n$ such that
$A(t)=\sum_i A_i(t)$. The cumulative response processes from the
servers are denoted by $R_i(t)$, whereas the total cumulative
response process from the data center is denoted by $R(t)=\sum_i
R_i(t)$. All arrival and response processes are assumed to be
non-negative, non-decreasing, and left-continuous, and satisfy the
initial condition $A(0)=R(0)=0$. For convenience we use the
bivariate extensions $A(s,t):=A(t)-A(s)$ and $R(s,t):=R(t)-R(s)$.

The service provided by a server is modeled in terms of
probabilistic lower bounds using the concept of a stochastic
service process. This is a bivariate random process $S(s,t)$ which
is non-negative, non-decreasing, and left-continuous. Formally, a
server is said to guarantee a (stochastic) service process
$S(s,t)$ if for \textit{any} arrival process $A(t)$ the
corresponding response process $R(t)$ from the server satisfies
for all $t\geq0$
\begin{equation}\label{service}
R(t) \geq A \conv S(t)~,
\end{equation}
where `$\conv$' denotes the min-plus convolution operator, i.e.,
for two (random) processes $A(t)$ and $S(s,t)$,
\begin{equation}
A\conv S (t):=\inf\limits_{0\leq s\leq t}
\left\{A(s)+S(s,t)\right\}~.\label{eq:scdef}
\end{equation}
The inequality in (\ref{service}) is assumed to hold almost
surely. Note that the lower bound set by the service process is
invariant to the arrival processes.

We are now ready to state the aggregation property.  The proof is
deferred to Section~\ref{s.proofs}.

\begin{lemma}\label{aggregate}Consider an arrival process
$A(t)$ which is dispatched to $n$ servers. Each server $i$ is
work-conserving with constant rate capacity $\mu>0$. Arrivals are
dispatched deterministically across the servers such that each
server $i$ receives a fraction $\frac{1}{n}$ of the arrivals.
Then, the system has service process $S(s,t)=n\mu(t-s),$ i.e.,
$R(t)\geq A \conv S(t)$.
\end{lemma}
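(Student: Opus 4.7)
The plan is to reduce the lemma to the classical single-server network calculus fact that a constant-rate, work-conserving server with capacity $\mu$ offers the deterministic service curve $S_i(s,t)=\mu(t-s)$, and then to aggregate across the $n$ identical servers. First I would recall why a single work-conserving server with capacity $\mu$ satisfies $R_i(t)\geq A_i \conv S_i(t)$ for any input $A_i$: letting $s^{*}\leq t$ be the start of the busy period of server $i$ that contains $t$ (with the convention $s^{*}=t$ if the server is idle at $t$), the backlog is zero at $s^{*}$ and the server works continuously at rate $\mu$ on $[s^{*},t]$. Thus $R_i(t)=A_i(s^{*})+\mu(t-s^{*})\geq \inf_{0\leq s\leq t}\{A_i(s)+\mu(t-s)\}=(A_i \conv S_i)(t)$. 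This is standard (see, e.g., \cite{Book-Chang}) and I would only sketch it.

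Next, I would use the assumption that dispatching is deterministic and perfectly balanced, so $A_i(s)=A(s)/n$ for every $i$ and every $s\in[0,t]$. Substituting into the per-server bound yields
\[
R_i(t)\geq \inf_{0\leq s\leq t}\left\{\frac{A(s)}{n}+\mu(t-s)\right\}.
\]
Summing this identical bound over $i=1,\dots,n$ and using $R(t)=\sum_i R_i(t)$ gives
\[
R(t)\geq n\cdot\inf_{0\leq s\leq t}\left\{\frac{A(s)}{n}+\mu(t-s)\right\}
=\inf_{0\leq s\leq t}\{A(s)+n\mu(t-s)\},
\]
where the last equality is positive homogeneity of $\inf$. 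The right-hand side is precisely $(A\conv S)(t)$ with $S(s,t)=n\mu(t-s)$, which is the desired aggregation property.

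The one step that requires justification is the interchange $\sum_i \inf_s f_i(s)=\inf_s \sum_i f_i(s)$, since in general only `$\leq$' holds. Here it holds with equality because all $f_i$ are the same function of $s$, so each per-server infimum is attained at a common argument. This is also the place where the symmetry assumption in the lemma is essential: an asymmetric split of $A$ across the servers would make the per-server minimizers differ, destroying the clean aggregation and only yielding a weaker service curve. I would therefore emphasize in the write-up that perfectly balanced deterministic dispatching is exactly what lets the $n$ individual min-plus service guarantees collapse into a single $n\mu$-rate service curve for the whole data center, which is what makes the fast time-scale analysis in the remainder of Section~\ref{s.fast} tractable.
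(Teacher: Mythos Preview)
Your proposal is correct and follows essentially the same route as the paper's proof: per-server constant-rate service curve $S_i(s,t)=\mu(t-s)$, substitution $A_i(s)=A(s)/n$ from balanced dispatching, and summation over $i$ to obtain $R(t)\geq\inf_{0\leq s\leq t}\{A(s)+n\mu(t-s)\}$. The only differences are expository: you spell out the busy-period justification for the single-server service curve and the reason the $\sum$/$\inf$ interchange is legitimate (identical summands), whereas the paper simply cites~\cite{Book-Chang} and factors $1/n$ out of the infimum directly.
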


The significance of the Lemma is that if the SLA is verified for
the virtual server with arrival process $A(t)$ and service process
$S(s,t)$, then the SLA is verified for each of the $n$ servers. We
point out that the lemma is based on the availability of a
dispatching policy with equal weights for homogenous servers.

\subsection{Arrival Processes}

Now that we can reduce the study of the multi-server system to the
study of a single server system using Lemma \ref{aggregate}, we
can move to characterizing the impact of the arrival process on
the SLA constraint in the Data Center Optimization Problem.

In particular, the next step in deriving the SLA constraint $n\geq
\frac{C(\bar{D},\bar{\eps})}{\mu}$ is to derive a bound on the
distribution of the delay at the virtual server with arrival
process $A(t)$ and service process
$S(s,t)=C(\bar{D},\bar{\eps})(t-s)$, i.e.,
\begin{equation}
\P\Big(D(t)>\bar{D}\Big)\leq\eps(\bar{D})~.\label{eq:db0}
\end{equation}
It is important to observe that the violation probability $\eps$
holds for the \textit{transient} \textit{virtual} delay process
$D(t)$, which is defined as $D(t):=\inf\left\{d:A(t-d)\leq
R(t)\right\}$, and which models the delay spent in the system by
the job leaving the system, if any, at time $t$. By this
definition, and using the servers' homogeneity and also the
deterministic splitting of arrivals from Lemma~\ref{aggregate},
the virtual delay for the aggregate virtual server is the same as
the virtual delay for the individual servers. This fact guarantees
that an SLA constraint on the virtual server implicitly holds for
the individual servers as well. Moreover, the violation
probability $\eps$ in Eq.~(\ref{eq:db0}) is derived so that it is
time invariant, which implies that it bounds the distribution of
the stead-state delay $D=\lim_{t\rightarrow\infty}D(t)$ as well.
Therefore, the value of $C(\bar{D},\bar{\eps})$ can be finally
computed by solving the implicit equation
$\eps(\bar{D})=\bar{\eps}$.

In the following, we follow the outline above to compute
$C(\bar{D},\bar{\eps})$ for light- and heavy-tailed arrival
processes.  Figure \ref{fig:trace} depicts examples of the three
types of arrival processes we consider in 1 frame: Poisson,
Markov-Modulated (MM), and heavy-tailed arrivals. In all three
cases, the mean arrival rate is $\lambda=300$. The figure clearly
illustrates the different levels of burstiness of the three
traces.

\begin{figure}[t]
\begin{center}
{\includegraphics[width=0.8\linewidth]{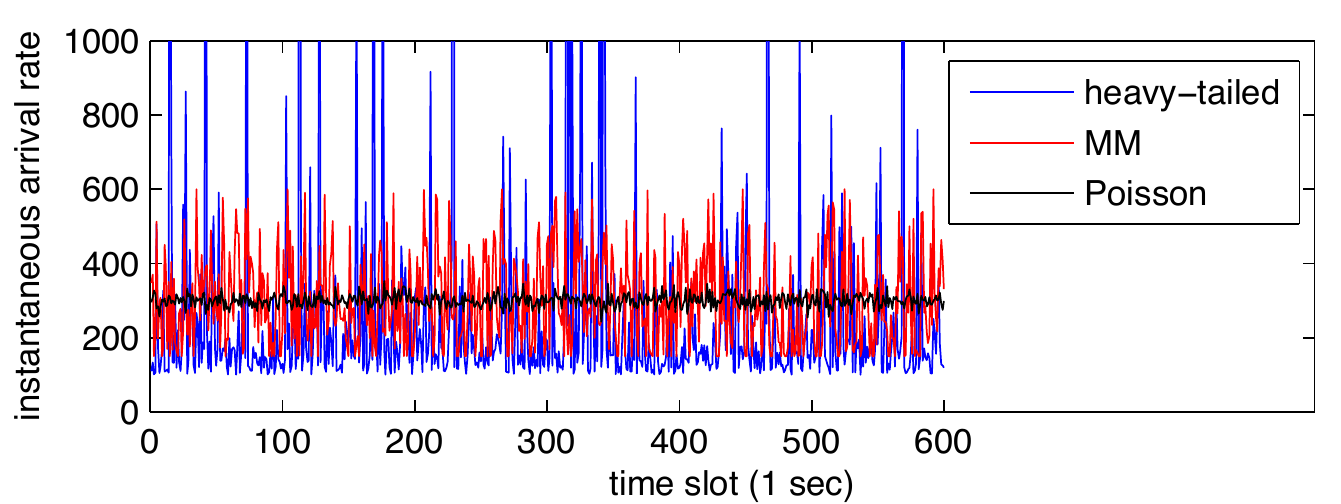}} \caption{Three
synthetically generated traces within 1 frame, with $\lambda=300$
of Poisson, Markov-Modulated (MM) ($T=1$, $\lambda_l=0.5\lambda$
and $\lambda_h=2\lambda$), and heavy-tailed arrivals
($b=\lambda/3$ and $\alpha=1.5$).} \label{fig:trace}
\end{center}
\end{figure}

\subsubsection{Light-tailed Arrivals}
We consider two examples of light-tailed arrival processes:
Poisson and Markov-Modulated (MM) processes.

\subsubsection*{Poisson Arrivals}
We start with the case of Poisson processes, which are
characterized by a low level of burstiness, due to the independent
increments property. The following proposition, providing the tail
of the virtual delay, is a minor variation of a result
from~\cite{Ciucu07}; for the proof see the Appendix.

\begin{proposition}\label{t.Poisson}
Let $A(t)$ be a Poisson process with some rate $\lambda>0$, and
define
\begin{equation}
\theta^*:=\sup\left\{\theta>0:\frac{\lambda}{\theta}\left(e^{\theta}-1\right)\leq
C(\bar{D},\bar{\eps})\right\}~.\label{eq:thetaStar}
\end{equation}
Then a bound on the transient delay process is given for all
$t\geq0$ by
\begin{equation}
\P\left(D(t)>\Bar{D}\right)\leq e^{-\theta^*
C(\Bar{D},\Bar{\varepsilon})\Bar{D}}:=\varepsilon(\bar{D})~.\label{eq:dbMM}
\end{equation}
\end{proposition}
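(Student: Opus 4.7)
The plan is to combine Lemma~\ref{aggregate} with a Chernoff-type bound driven by the moment generating function of the Poisson arrival process, using the well-known exponential martingale associated with $A(t)$.

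First, by Lemma~\ref{aggregate} the $n$-server system can be replaced by a single virtual server with constant service rate $C := C(\bar{D},\bar{\eps})$, so that $R(t) \geq A \conv S(t) = \inf_{0 \leq s \leq t}[A(s) + C(t-s)]$. Second, I would translate the delay event into a tail event for arrivals. From $D(t) = \inf\{d : A(t-d) \leq R(t)\}$ together with the service lower bound,
\begin{equation*}
\{D(t) > \bar{D}\} \subseteq \bigcup_{0 \leq s \leq t-\bar{D}}\{A(s, t-\bar{D}) > C(t-s)\}.
\end{equation*}
After the substitution $v = t - \bar{D} - s$ and using the stationary increments of $A$, this reduces to the one-sided event $\{\sup_{v \geq 0}[A(v) - Cv] > C\bar{D}\}$, which no longer depends on $t$.

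Third, I would bound this supremum using the exponential martingale for a Poisson process. The Poisson MGF gives $\mathbb{E}[e^{\theta A(v)}] = e^{\lambda v(e^\theta - 1)}$, hence $M_v := e^{\theta[A(v) - Cv]}$ has mean $e^{v[\lambda(e^\theta - 1) - \theta C]}$. By the very definition of $\theta^*$ in (\ref{eq:thetaStar}), $\lambda(e^{\theta^*}-1) = \theta^* C$, so at $\theta = \theta^*$ the process $\{M_v\}_{v\geq 0}$ is a nonnegative unit-mean martingale. Applying Doob's optional stopping / maximal inequality at the hitting time $\tau := \inf\{v : A(v) - Cv > C\bar{D}\}$ yields the Cram\'er--Lundberg-type bound $\P(\tau < \infty) \leq e^{-\theta^* C \bar{D}}$. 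Chaining these inclusions produces $\P(D(t) > \bar{D}) \leq e^{-\theta^* C \bar{D}}$ uniformly in $t$, which also controls the steady-state delay by monotone convergence.

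The main technical obstacle is getting this clean single-exponential tail \emph{without} accumulating an extra prefactor. A naive union-plus-Chernoff bound over the time index $v$ produces a geometric sum whose common ratio equals $e^{\lambda(e^\theta - 1) - \theta C}$; this series converges for $\theta < \theta^*$ but blows up precisely at $\theta = \theta^*$, so its prefactor is useless for the extremal exponent. The martingale route circumvents this by exploiting that on $\{\tau < \infty\}$ one has $M_\tau \geq e^{\theta^* C \bar{D}}$, trading the union bound for the sharper optional-stopping inequality. An alternative that stays within the stochastic network calculus framework is to work with $\theta$ strictly below $\theta^*$, bound the geometric sum, and then absorb the resulting constant into a refined continuous-time estimate as in~\cite{Ciucu07}; either route delivers the stated bound $\varepsilon(\bar{D}) = e^{-\theta^* C(\bar{D},\bar{\eps})\bar{D}}$.
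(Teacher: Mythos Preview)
Your proposal is correct and follows essentially the same route as the paper: both reduce the delay event to a one-sided boundary crossing for $A-C\cdot$ and control it via the exponential (super)martingale $e^{\theta^*(A-C\cdot)}$ together with a Doob-type maximal/optional-stopping inequality. The only cosmetic difference is that the paper builds the supermartingale directly on a backward filtration at the fixed time $t-\bar{D}$ (in the spirit of Kingman), whereas you first invoke stationarity/time-reversal of the Poisson increments to pass to a forward process and then apply the Cram\'er--Lundberg argument; the two constructions are equivalent and yield the identical bound $e^{-\theta^* C\bar{D}}$ with no extraneous prefactor.
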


Solving for $C(\bar{D},\bar{\eps})$ by setting the violation
probability $\varepsilon(\bar{D})$ equal to $\bar{\eps}$ yields
the implicit solution
\begin{equation*}
C(\bar{D},\bar{\eps})=-\frac{1}{\theta^*\Bar{D}}\log{\bar{\eps}}~.
\end{equation*}
Further, using the monotonicity of the function
$\frac{\lambda}{\theta}\left(e^{\theta}-1\right)$ in $\theta>0$ we
immediately get the explicit solution
\begin{equation}
C(\bar{D},\bar{\eps})=\frac{K}{\log\left(1+K\right)}\lambda~,\label{eq:CPoisson}
\end{equation}
where $$K=-\frac{\log{\bar{\eps}}}{\lambda\bar{D}}~.$$

%

\subsubsection*{Markov-Modulated Arrivals}
Consider now the case of Markov-Modulated (MM) processes which,
unlike the Poisson processes, do not necessarily have independent
increments.  The key feature for the purposes of this paper is
that the burstiness of MM processes can be arbitrarily adjusted.

We consider a simple MM processes with two states. Let a discrete
and homogeneous Markov chain $x(s)$ with two states denoted by
`low' and `high', and transition probabilities $p_h$ and $p_l$
between the `low' and `high' states, and vice-versa, respectively.
Assuming that a source produces at some constant rates
$\lambda_l>0$ and $\lambda_h>\lambda_l$ while the chain $x(s)$ is
in the `low' and `high' states, respectively, then the
corresponding MM cumulative arrival process is
\begin{equation}
A(t)=\sum_{s=1}^t\left(
\lambda_lI_{\{x(s)=`low'\}}+\lambda_hI_{\{x(s)=`high'\}}\right)~,\label{eq:mmoop}
\end{equation}
where $I_{\{\cdot\}}$ is the indicator function. The average rate
of $A(t)$ is
$\lambda=\frac{p_l}{p_h+p_l}\lambda_l+\frac{p_h}{p_h+p_l}\lambda_h$.

To adjust the burstiness level of $A(t)$ we introduce the
parameter $T:=\frac{1}{p_h}+\frac{1}{p_l}$, which is the average
time for the Markov chain $x(s)$ to change states twice. We note
that the higher the value of $T$ is, the higher the burstiness
level becomes (the time periods whilst x(s) spends in the `high'
or `low' states get longer and longer).

To compute the delay bound let us construct the matrix
\begin{equation*}
\Psi(\theta)=\left(\begin{array}{cc}(1-p_h)e^{\theta\lambda_l}&p_he^{\theta\lambda_h}\\
p_le^{\theta\lambda_l}&(1-p_l)e^{\theta\lambda_h}\end{array}\right)~,
\end{equation*}
for some $\theta>0$ and consider its spectral radius

\begin{equation}\label{eq:lambdatheta}
\lambda(\theta):=\frac{(1-p_h)e^{\theta
\lambda_l}+(1-p_l)e^{\theta \lambda_h}+\sqrt{\Delta}}{2}~,
\end{equation}
where $\Delta=\left((1-p_h)e^{\theta \lambda_l}-(1-p_l)e^{\theta
\lambda_h}\right)^2+4p_hp_l
e^{\theta\left(\lambda_l+\lambda_h\right)}$. Let also
\begin{equation}\label{eq:ktheta}
K(\theta):=\max\left\{\frac{p_h e^{\theta
\lambda_h}}{\lambda(\theta)-(1-p_h)e^{\theta
\lambda_l}},\frac{\lambda(\theta)-(1-p_h)e^{\theta \lambda_l}}{p_h
e^{\theta \lambda_h}}\right\}~.
\end{equation}
The two terms are the ratios of the elements of the
right-eigenvector of the matrix $\Psi(\theta)$. Also, let
\begin{equation*}
\theta^*:=\sup\left\{\theta>0:\frac{1}{\theta}\log{\lambda(\theta)}\le
C(\bar{D},\bar{\eps})\right\}~.
\end{equation*}

Using these constructions, and also the constant rate service
assumption, a result on the backlog bound from~\cite{Book-Chang},
pp.~340, immediately lends itself to the corresponding result on
the virtual delay:

\begin{proposition}\label{t.mm}
Consider a MM cumulative arrival process as defined
in~(\ref{eq:mmoop}), with the $\lambda(\theta)$ given in
~(\ref{eq:lambdatheta}), and $K(\theta)$ given
in~(\ref{eq:ktheta}), then a bound on the transient delay process
is
\begin{equation*}
\P\Big(D(t)>\Bar{D}\Big)\leq K(\theta^*)e^{-\theta^*
C(\bar{D},\bar{\eps})\Bar{D}}:=\varepsilon(\bar{D})~.
\end{equation*}
\end{proposition}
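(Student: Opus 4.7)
The plan is to convert the virtual delay tail into a virtual backlog tail at a single constant-rate server, and then invoke a known Markov-Chernoff backlog bound for MM sources. First, by Lemma~\ref{aggregate} we may replace the $n$-server system by a single virtual server with service process $S(s,t)=n\mu(t-s)=C(t-s)$, where $C:=C(\bar{D},\bar{\eps})$, so that $R(t)\geq A\conv S(t)$.

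Next, I would relate delay to backlog. From $D(t)=\inf\{d:A(t-d)\leq R(t)\}$ and the definition of $\conv$ in~(\ref{eq:scdef}), the event $\{D(t)>\bar{D}\}$ forces $A(t-\bar{D})>\inf_{0\leq s\leq t}\{A(s)+C(t-s)\}$. Because arrivals are non-decreasing and the service rate is non-negative, only indices $s\leq t-\bar{D}$ can witness this strict inequality, which rearranges to $A(s,t-\bar{D})-C(t-\bar{D}-s)>C\bar{D}$. Hence
\begin{equation*}
\P\bigl(D(t)>\bar{D}\bigr)\leq \P\Bigl(\sup_{0\leq s\leq t-\bar{D}}\bigl\{A(s,t-\bar{D})-C(t-\bar{D}-s)\bigr\}>C\bar{D}\Bigr),
\end{equation*}
and the quantity inside the outer probability is exactly the (transient) virtual backlog $B(t-\bar{D})$ at a constant-rate server of rate $C$ fed by $A$.

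Finally, I would apply the backlog bound from~\cite{Book-Chang}, pp.~340, for an MM source feeding a constant-rate server. That bound is built from the exponential martingale $M_s=v_{x(s)}e^{\theta(A(s)-Cs)}$, where $(v_{\mathrm{low}},v_{\mathrm{high}})$ is a right-eigenvector of $\Psi(\theta)$ with eigenvalue $\lambda(\theta)$. Doob's maximal inequality applied to $M_s$, combined with the eigenvector-ratio bound encoded in $K(\theta)$ from~(\ref{eq:ktheta}), yields $\P(B(\tau)>x)\leq K(\theta)e^{-\theta x}$ for every $\tau\geq 0$, every $x>0$, and every $\theta>0$ with $\tfrac{1}{\theta}\log\lambda(\theta)\leq C$. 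Specializing to $x=C\bar{D}$ and $\theta=\theta^*$ (the supremal feasible $\theta$) produces the bound claimed in the proposition.

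The main obstacle is not the delay-to-backlog reduction, which is essentially bookkeeping around $\conv$, but importing the backlog bound with the correct prefactor. A naive Chernoff-and-union argument over $s$ introduces a spurious factor $\bigl(1-\lambda(\theta)e^{-\theta C}\bigr)^{-1}$; avoiding it requires the Markov-martingale argument of Chang, which is precisely where the eigenvector ratio $K(\theta)$ enters. Once that step is taken from the reference, the bound is uniform in $t$, so it also controls the steady-state delay $D=\lim_{t\to\infty}D(t)$, consistent with the usage of $\varepsilon(\bar{D})$ in the SLA constraint.
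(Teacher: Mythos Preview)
Your proposal is correct and matches the paper's approach: the paper does not prove this proposition in detail but simply states that the backlog bound from~\cite{Book-Chang}, pp.~340, together with the constant rate service assumption, ``immediately lends itself to the corresponding result on the virtual delay,'' and you have spelled out exactly this reduction. In particular, your delay-to-backlog inequality is the same one the paper uses in the proof of Proposition~\ref{t.Poisson}, and your identification of the Markov-martingale argument (rather than a naive Chernoff/union bound) as the source of the clean prefactor $K(\theta^*)$ is precisely the content of the cited result in Chang.
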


Setting the violation probability $\varepsilon(\bar{D})$ equal to
$\bar{\eps}$ in Theorem~\ref{t.mm} yields the implicit solution
\begin{equation}
C(\bar{D},\bar{\eps})=-\frac{1}{\theta^*\Bar{D}}\log{\frac{\bar{\eps}}{K(\theta^*)}}~.
\end{equation}

\subsection{Heavy-tailed and Self-similar Arrivals}
We now consider the class of heavy-tailed and self-similar arrival
processes. These processes are fundamentally different from
light-tailed processes in that deviations from the mean increase
in time and decay in probability as a power law, i.e., more slower
than the exponential.

We consider in particular the case of a source generating jobs in
every slot according to i.i.d. Pareto random variables $X_{i}$
with tail distribution for all $x\geq b$:
\begin{equation} \label{e.pareto}
\mathbb{P}\left(X_{i}>x\right)=(x/b)^{-\alpha}~,
\end{equation}
where $1<\alpha<2$. $X$ has finite mean $E[X]=\alpha b/(\alpha-1)$
and infinite variance. For the corresponding bound on the
transient delay we reproduce a result from~\cite{LiBuCi12}.
\begin{proposition}\label{t.htss}
Consider a source generating jobs in every slot according to
i.i.d. Pareto random variables $X_{i}$, with the tail distribution
from~(\ref{e.pareto}). The bound on the transient delay is
\begin{equation} \label{e.HT1}
\P\Big(D(t)>\Bar{D}\Big)\leq
K\left(C(\bar{D},\bar{\eps})\Bar{D}\right)^{1-\alpha}:=\varepsilon(\bar{D})~,
\end{equation}
where
$$K=\inf_{1<\gamma<\frac{C(\bar{D},\bar{\eps})}{\lambda}}\left\{\left(\frac{C(\bar{D},\bar{\eps})}{\gamma}-\lambda\right)^{-1}\frac{\alpha\gamma^{\frac{\alpha-1}{\alpha}}}{(\alpha-1)\log{\gamma}}\right\}.$$
\end{proposition}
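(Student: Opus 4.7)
The strategy is to combine the stochastic network calculus bound from Lemma~\ref{aggregate} with a heavy-tailed concentration inequality for sums of i.i.d.\ Pareto random variables. Recall that by Lemma~\ref{aggregate}, the aggregated system admits the constant-rate service process $S(s,t) = C(\bar{D},\bar{\eps})(t-s)$. Starting from the definition of the virtual delay, the event $\{D(t) > \bar{D}\}$ implies $A(t-\bar{D}) > R(t)$, and substituting $R(t) \geq A \conv S(t) = \inf_{0\le s\le t}\{A(s) + C(\bar{D},\bar{\eps})(t-s)\}$ yields
\begin{equation*}
\P\bigl(D(t) > \bar{D}\bigr) \;\leq\; \P\!\left(\sup_{0\leq s\leq t-\bar{D}} \bigl\{A(s,t-\bar{D}) - C(\bar{D},\bar{\eps})(t-s)\bigr\} > 0\right).
\end{equation*}
Applying a union bound over the discrete set of times $s$, substituting $u = t-\bar{D}-s$, and using the stationarity of the i.i.d.\ Pareto arrivals (so that $A(u)$ has the same distribution as $\sum_{i=1}^{u} X_i$), reduces the problem to controlling
\begin{equation*}
\P\bigl(D(t) > \bar{D}\bigr) \;\leq\; \sum_{u\geq 0} \P\!\left(\sum_{i=1}^{u} X_i > C(\bar{D},\bar{\eps})(u+\bar{D})\right).
\end{equation*}

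The crux is to bound the tail of a sum of heavy-tailed i.i.d.\ variables with a decay that is summable in $u$ yet produces the correct $\bar{D}^{1-\alpha}$ scaling. The plan is to employ a truncation/splitting argument indexed by the free parameter $\gamma > 1$: write each $X_i = X_i \mathbb{1}_{\{X_i \leq z/\gamma\}} + X_i \mathbb{1}_{\{X_i > z/\gamma\}}$ with $z = C(\bar{D},\bar{\eps})(u+\bar{D})$. A union bound splits the event $\{\sum X_i > z\}$ into (a) a ``single big jump'' event, bounded by $u \P(X_1 > z/\gamma) = u b^{\alpha}(\gamma/z)^{\alpha}$, and (b) an event where all $X_i$ are truncated but their sum still exceeds $z$, which is bounded via a Chernoff-style argument on truncated Pareto variables. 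The parameter $\gamma \in (1, C/\lambda)$ ensures that after subtracting the mean contribution $\lambda u$, the effective service rate $C/\gamma$ remains above $\lambda$, which is what allows the geometric series in $u$ to converge and produces the factor $(C/\gamma - \lambda)^{-1}$ in $K$.

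After summing the ``big-jump'' contribution over $u$ (a geometric-type tail that sums to a constant times $(C\bar{D})^{1-\alpha}$) and separately controlling the truncated contribution (which produces the $\alpha \gamma^{(\alpha-1)/\alpha}/\bigl((\alpha-1)\log \gamma\bigr)$ factor from a careful optimization of the Chernoff parameter against the truncation level), the two terms combine into a single $(C(\bar{D},\bar{\eps})\bar{D})^{1-\alpha}$ decay. Taking the infimum over the free parameter $\gamma$ yields the stated constant $K$.

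The main obstacle is the truncated-sum estimate: unlike the light-tailed cases (Propositions~\ref{t.Poisson} and~\ref{t.mm}) where a clean Chernoff bound applied globally is enough, here one must balance the Chernoff bound on the truncated variables against the cost of truncating Pareto tails, and do so uniformly in $u$ so that the resulting sum both converges and inherits the correct $(C\bar{D})^{1-\alpha}$ scaling. Since Proposition~\ref{t.htss} is explicitly reproduced from~\cite{LiBuCi12}, I would cite that reference for the technical details of the truncation estimate while spelling out the network-calculus reduction and the summation over $u$ in full.
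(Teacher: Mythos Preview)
The paper does not give a proof of Proposition~\ref{t.htss}; it states explicitly that the bound is reproduced from~\cite{LiBuCi12}. Your decision to cite that reference for the technical core therefore matches exactly what the paper does.

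That said, the mechanism you sketch has a real gap. After the (correct) network-calculus reduction and the union bound over $u$, your ``single big jump'' term is
\[
u\,\P\!\left(X_1>\tfrac{z}{\gamma}\right)\;=\;u\,b^{\alpha}\gamma^{\alpha}\bigl(C(\bar D,\bar\eps)(u+\bar D)\bigr)^{-\alpha},
\]
and summing this over $u\geq 0$ produces a series whose general term is of order $u^{1-\alpha}$. For $1<\alpha<2$ this series \emph{diverges}; it is not a ``geometric-type tail that sums to a constant times $(C\bar D)^{1-\alpha}$,'' and no truncation level that grows at most linearly in $u$ can repair it. Hence the per-summand truncation followed by a union bound over all integer times $u$ cannot deliver the stated bound.

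The structure of $K$ is the tell: the factor $1/\log\gamma$ is the signature of a \emph{geometric discretization of the time axis}, not of a Chernoff optimization against a truncation level. In~\cite{LiBuCi12} one first establishes a pointwise heavy-tailed deviation bound of the form $\P\bigl(A(u)>\rho u\bigr)\leq c\,u^{1-\alpha}$ for a rate $\rho\in(\lambda,C)$, and then controls $\sup_u\{A(u)-Cu\}$ by sampling $u$ along a geometric grid with ratio $\gamma$, using the slack between $C$ and $C/\gamma$ to dominate the continuum by the grid points. The number of contributing scales is $O(1/\log\gamma)$, which is where that factor comes from; the $(C/\gamma-\lambda)^{-1}$ arises when the per-scale violation probabilities are summed; and the remaining $\alpha\gamma^{(\alpha-1)/\alpha}/(\alpha-1)$ piece comes from the pointwise Pareto-sum estimate. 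The ingredients you list are the right ones, but $\gamma$ indexes a time-grid ratio, not a per-$X_i$ truncation threshold.
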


Setting the violation probability $\varepsilon(\bar{D})$ equal to
$\bar{\eps}$, we get the implicit solution
\begin{equation}
\inf_{1<\gamma<\frac{C(\bar{D},\bar{\eps})}{\lambda}}\left\{\frac{\gamma}{C(\bar{D},\bar{\eps})^{\alpha-1}\left(C(\bar{D},\bar{\eps})-\gamma\lambda\right)}\frac{\gamma^{\frac{\alpha-1}{\alpha}}}{\log{\gamma^{\frac{\alpha-1}{\alpha}}}}\right\}=\bar{\eps}\bar{D}^{\alpha-1}~.\label{eq:solHT}
\end{equation}

\section{Case Studies}
\label{s.results}

Given the model described in the previous two sections, we are now ready to explore the potential of dynamic resizing in data centers, and how this potential depends on the interaction between non-stationarities at the slow time-scale and burstiness/self-similarity at the fast time-scale.  Our goal in this section is to provide insight into which workloads dynamic resizing is valuable for.
To accomplish this, we provide a mixture of analytic results and trace-driven numerical simulations in this section.

It is important to note that the case studies that follow depend fundamentally on the modeling performed so far in the paper, which allows us to capture and adjust independently, both fast time-scale and slow time-scale properties of the workload. The generality of our model framework enables thus a rigorous study of the impact of the workload on value of dynamic resizing.

\subsection{Setup}

Throughout the experimental setup, our aim is to choose parameters that provide conservative estimates of the case savings from dynamic resizing.  Thus, one should interpret the savings shown as a lower-bound on the potential savings.

\subsubsection*{Model Parameters}
The time frame for adapting the number of servers $n_k$ is assumed
to be $10$ min, and each time slot is assumed to be $1$ s, i.e.,
$U=600$.  When not otherwise specified, we assume the following
parameters for the data center $SLA$ agreement: the (virtual)
delay upper bound $\bar{D}=200$ms, and the delay violation
probability $\bar{\varepsilon}=10^{-3}$.

The cost is characterized by the two parameters of $e_{0}$ and
$e_{1}$, and the switching cost $\beta$. We choose units such that
the fixed energy cost is $e_{0}=1$. The load-dependent energy
consumption is set to $e_{1}=0$, because the energy consumption of
current servers with typical utilization level is dominated by the fixed costs \cite{spec,BH07,Kusic08}.  Note that adjusting $e_0$ and $e_1$ changes the magnitude of potential savings under dynamic resizing, but does not affect the qualitative conclusions about the impact of the workload.  So, due to space constraints, we fix these parameters during the case studies.

The normalized switching cost $\beta / e_{0}$ measures the
duration a server must be powered down to outweigh the switching
cost. Unless otherwise specified, we use $\beta = 6$, which corresponds to the energy
consumption for one hour (six frames). This was chosen as an
estimate of the time a server should sleep so that the
wear-and-tear of power cycling matches that of operating \cite{BACFJP08,LinWAT11}.

\subsubsection*{Workload Information}

\begin{figure*}[t]
\begin{center}
\subfigure[Hotmail]
{\includegraphics[width=0.49\linewidth]{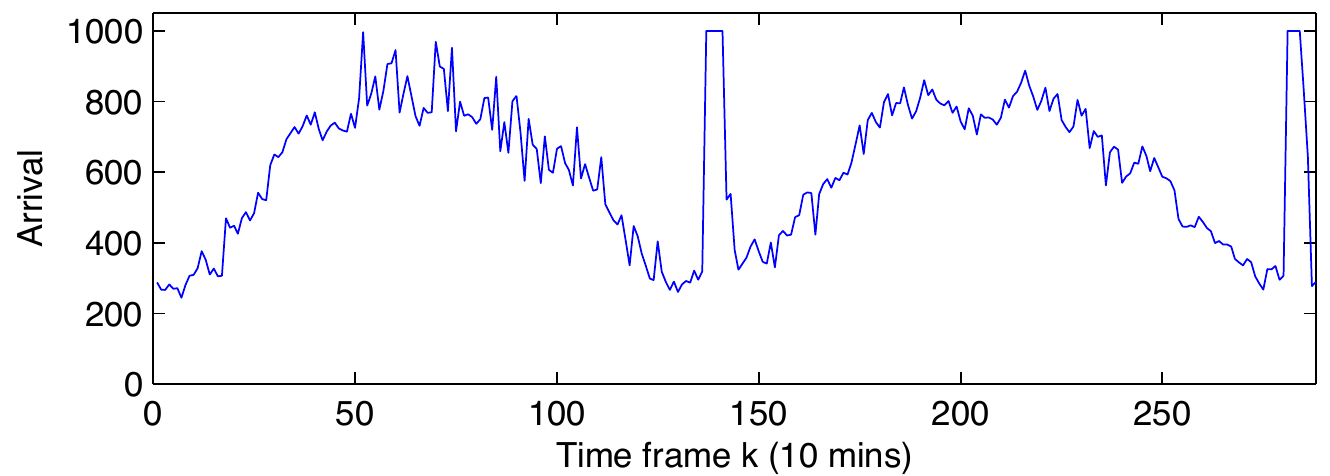}}
\subfigure[MSR]
{\includegraphics[width=0.49\linewidth]{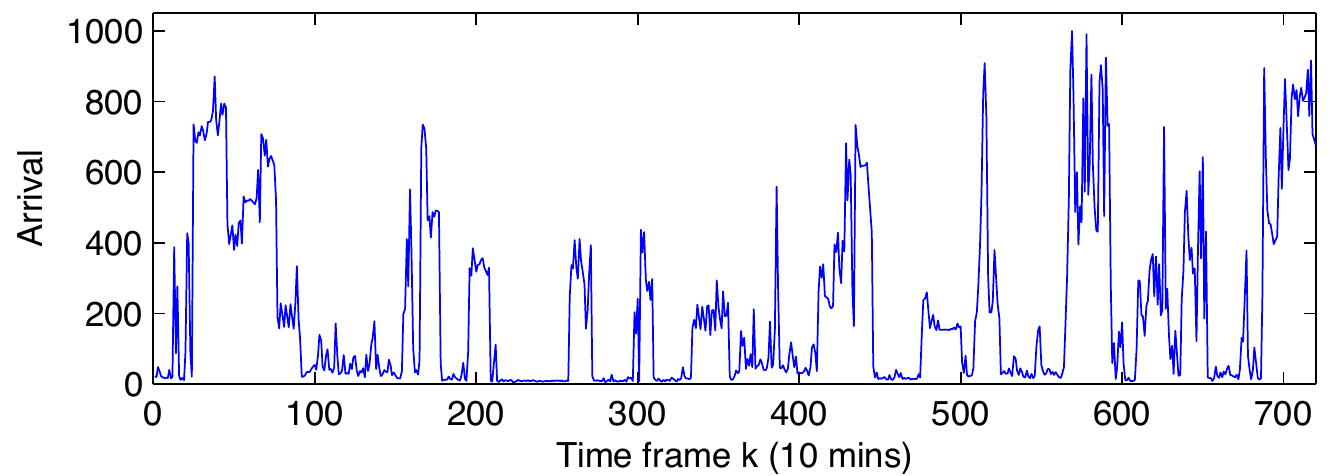}}
\caption{Illustration of the traces used for numerical experiments.} \label{fig:traces}
\end{center}
\end{figure*}

The workloads for these experiments are drawn from two real-world
data center traces. The first set of traces is from Hotmail, a
large email service running on tens of thousands of servers. We
used traces from 8 such servers over a 48-hour period, starting at
midnight (PDT) on Monday August 4 2008 \cite{Thereska09}. The
second set of traces is taken from 6 RAID volumes at MSR
Cambridge. The traced period was 1 week starting from 5PM GMT on
the 22nd February 2007 \cite{Thereska09}. Thus, these activity
traces represent a service used by millions of users and a small
service used by hundreds of users. The traces are normalized as
peak load $\lambda_{peak}=1000$, and are visualized in
Figure~\ref{fig:traces}. Both sets of traces show strong diurnal
properties and have peak-to-mean ratios (PMRs) of $1.64$ and
$4.64$ for Hotmail and MSR respectively. Time is partitioned into
10-minute frames and the load is averaged over each frame.


The traces provide information for the slow time-scale model.  To parameterize the fast time-scale model, we adapt the workload based on the mean arrival rate in each frame, i.e., $\lambda$. To parameterize the MM processes, we take $\lambda_{l}=0.5\lambda$, $\lambda_{h}=2\lambda$, and we adjust the burst parameter $T$ while keeping $\lambda$ fixed for each process. To parameterize the heavy-tailed processes, we adjust the tail index $\alpha$ for each process, and $b$ in (\ref{e.pareto}) is adapted accordingly in order to keep the mean fixed at $\lambda$. Unless otherwise stated, we fix $\alpha=1.5$ and $T=1$.

\subsubsection*{Comparative Benchmark}

We contrast three designs: (i) the optimal dynamic resizing, (ii) dynamic resizing via LCP, and (iii) the optimal `static' provisioning.

The results for the optimal dynamic resizing should be interpreted as characterizing the potential of dynamic resizing.  But, realizing this potential is a challenge that requires both sophisticated online algorithms and excellent predictions of future workloads.\footnote{Note that short-term predictions of workload demand within 24 hours can be quite accurate~\cite{Gmach07,Kusic08}.}

The results for LCP should be interpreted as one example of how much of the potential for dynamic resizing can be attained with an online algorithm.  One reason for choosing LCP is that it does not rely on predicting the workload in future frames, and thus provides a conservative bound on the achievable cost savings.

\begin{figure*}[t]
\begin{center}
\subfigure[Hotmail]
{\includegraphics[width=0.49\linewidth]{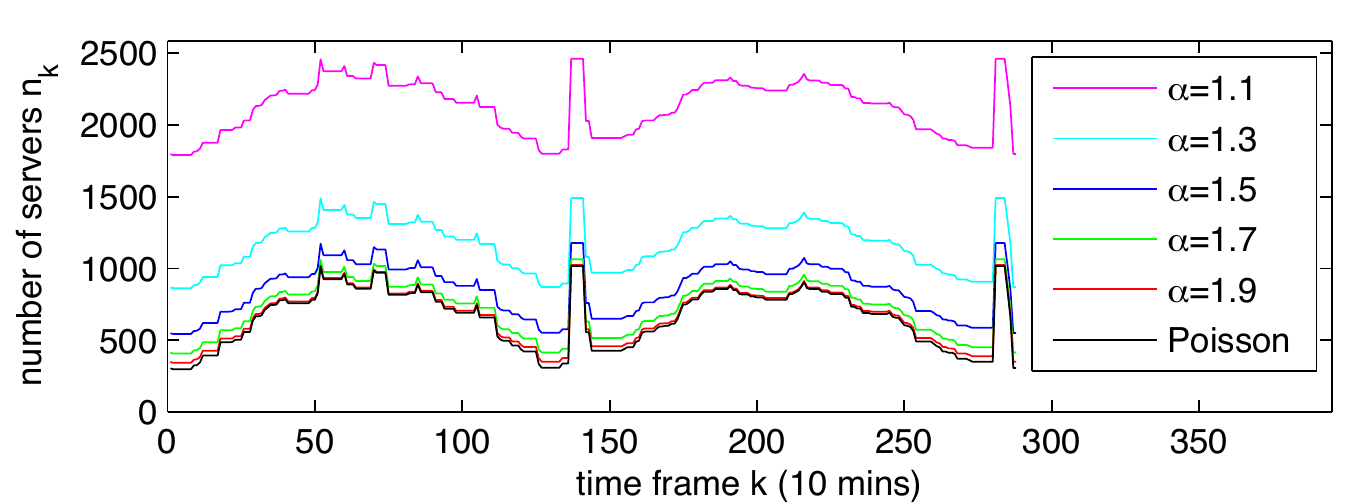}}
\subfigure[MSR]
{\includegraphics[width=0.49\linewidth]{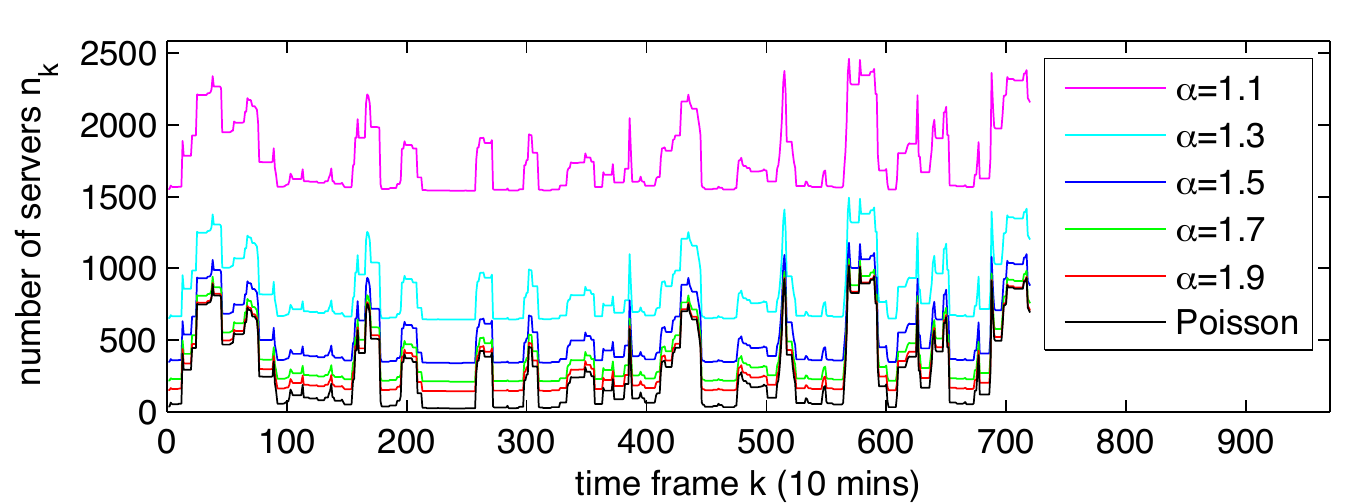}}
\caption{Impact of burstiness on provisioning $n_{k}$ for heavy-tailed arrivals.}
\label{fig:nkvaryalpha}
\subfigure[Hotmail]
{\includegraphics[width=0.49\linewidth]{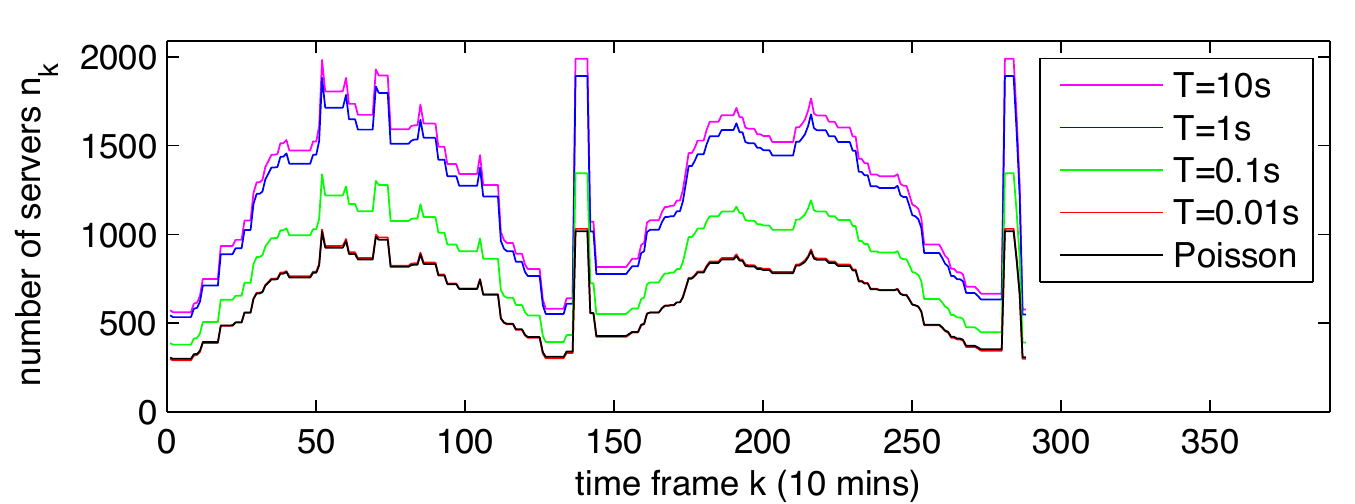}}
\subfigure[MSR]
{\includegraphics[width=0.49\linewidth]{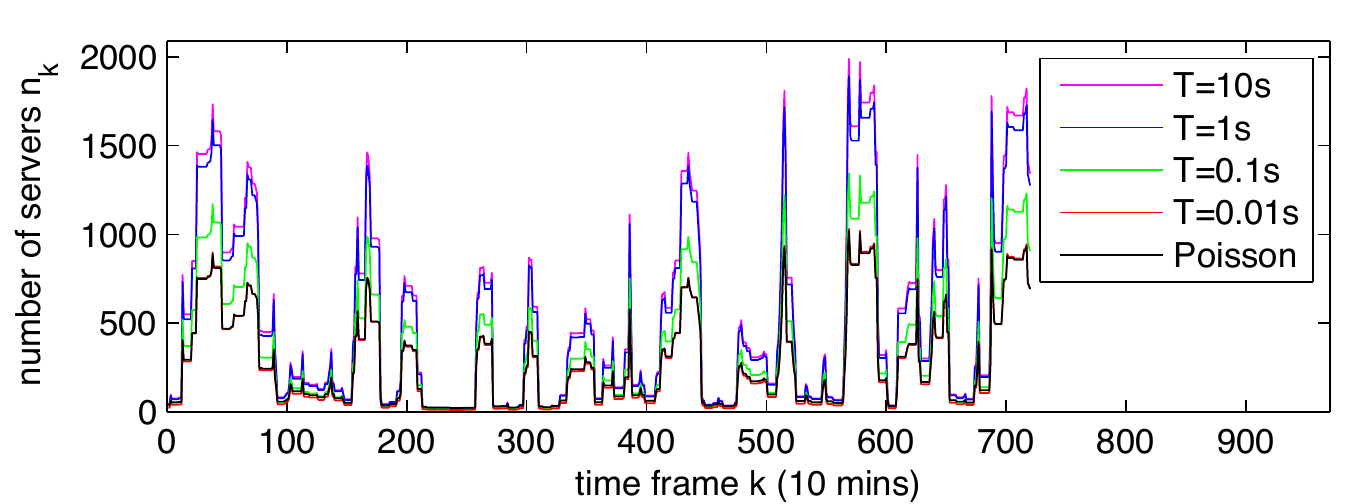}}
\caption{Impact of burstiness on provisioning $n_{k}$ for MM arrivals.}
\label{fig:nkvaryT}
\end{center}
\end{figure*}

\begin{figure}[t]
\begin{center}
\subfigure[Hotmail]
{\includegraphics[width=0.35\linewidth]{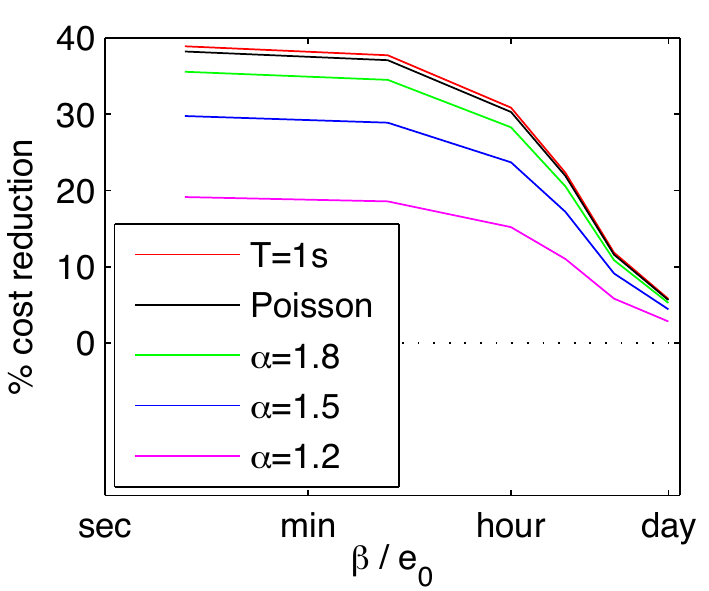}}
\subfigure[MSR]
{\includegraphics[width=0.35\linewidth]{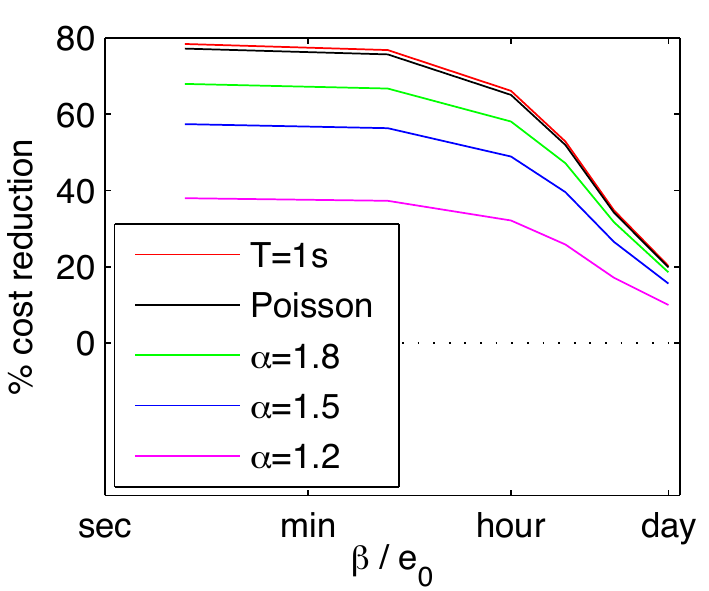}}
\caption{Impact of burstiness on the cost savings of dynamic resizing for different switching costs, $\beta$.}
\label{fig:savingsvaryburstiness}
\end{center}
\end{figure}

\begin{figure}[t]
\begin{center}
\subfigure[cost savings]
{\includegraphics[width=0.35\linewidth]{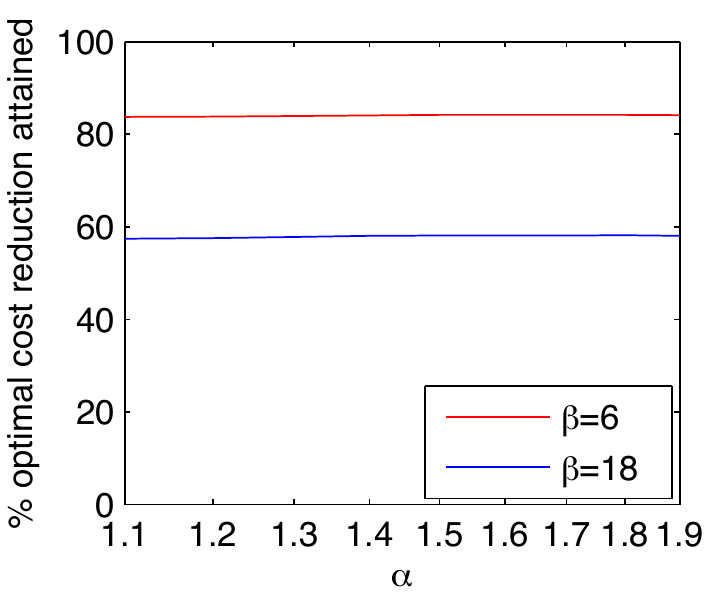}}
\subfigure[relative savings of LCP]
{\includegraphics[width=0.35\linewidth]{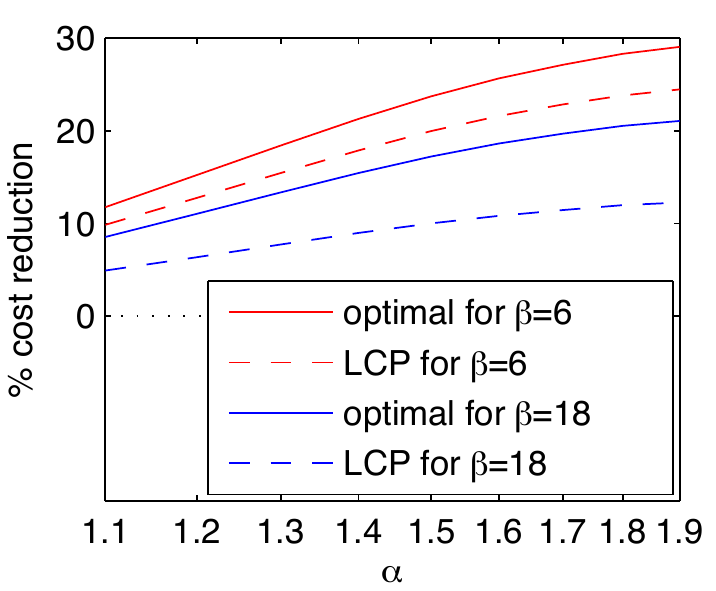}}
\caption{Impact of burstiness on the performance of LCP in the Hotmail trace.}
\label{fig:LCPsvaryburstiness}
\end{center}
\end{figure}

\begin{figure}[t]
\begin{center}
\subfigure[Hotmail]
{\includegraphics[width=0.35\linewidth]{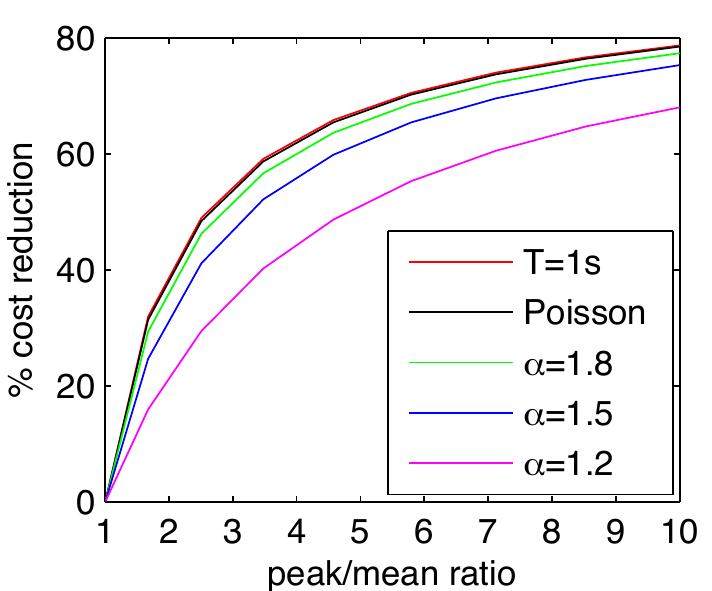}}
\subfigure[MSR]
{\includegraphics[width=0.35\linewidth]{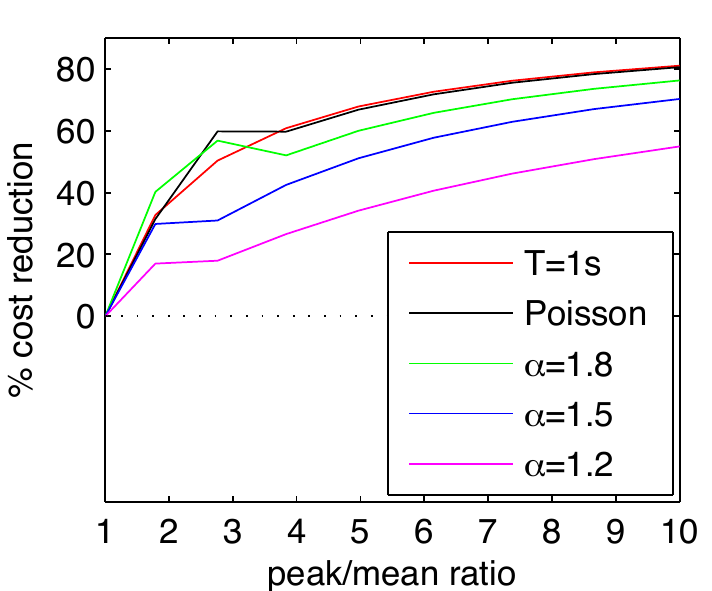}}
\caption{Impact of peak-to-mean ratio on the cost savings of the optimal dynamic resizing.}
\label{fig:savingsvarypeaktomean}
\end{center}
\end{figure}

\begin{figure*}[t]
\begin{center}
\subfigure[Hotmail]
{\includegraphics[width=0.49\linewidth]{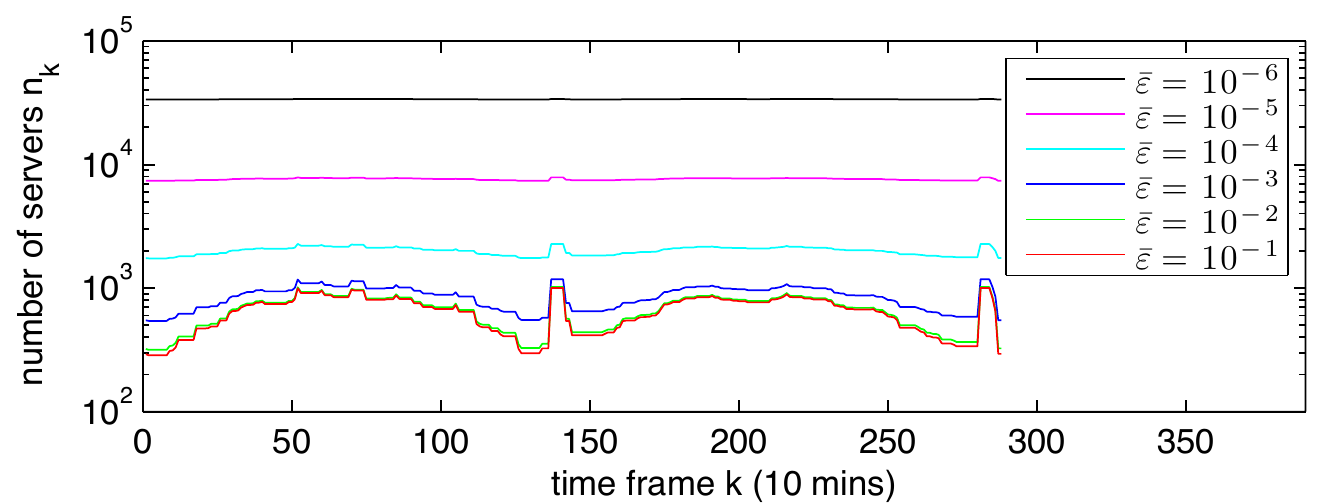}}
\subfigure[MSR]
{\includegraphics[width=0.49\linewidth]{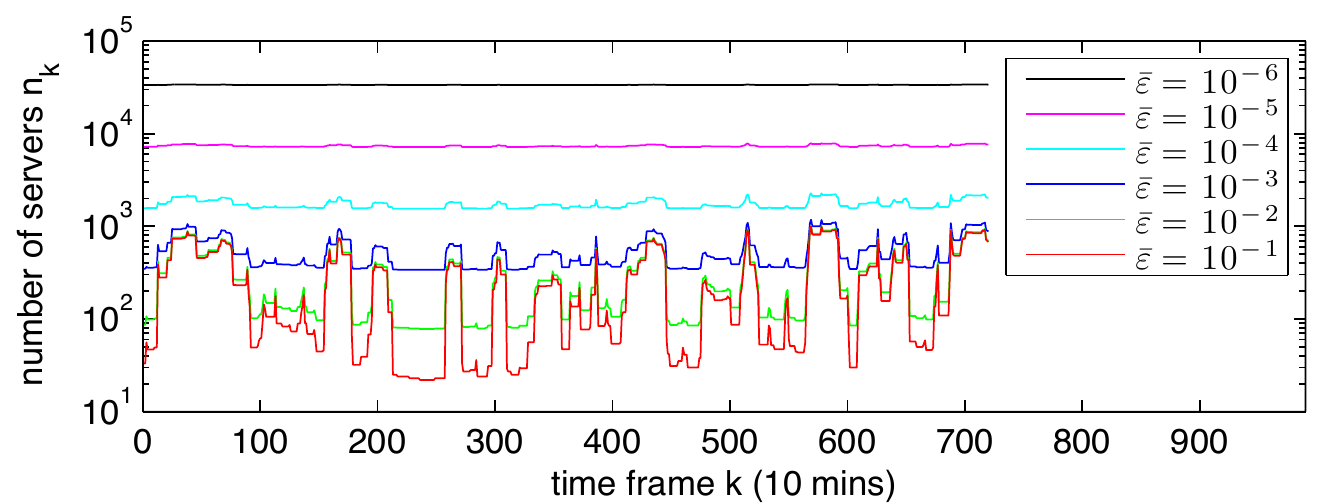}}
\caption{Impact of $\bar{\varepsilon}$ on provisioning $n_k$ for heavy tailed arrivals.}
\label{fig:nkvaryepsilonHT}
\subfigure[Hotmail]
{\includegraphics[width=0.49\linewidth]{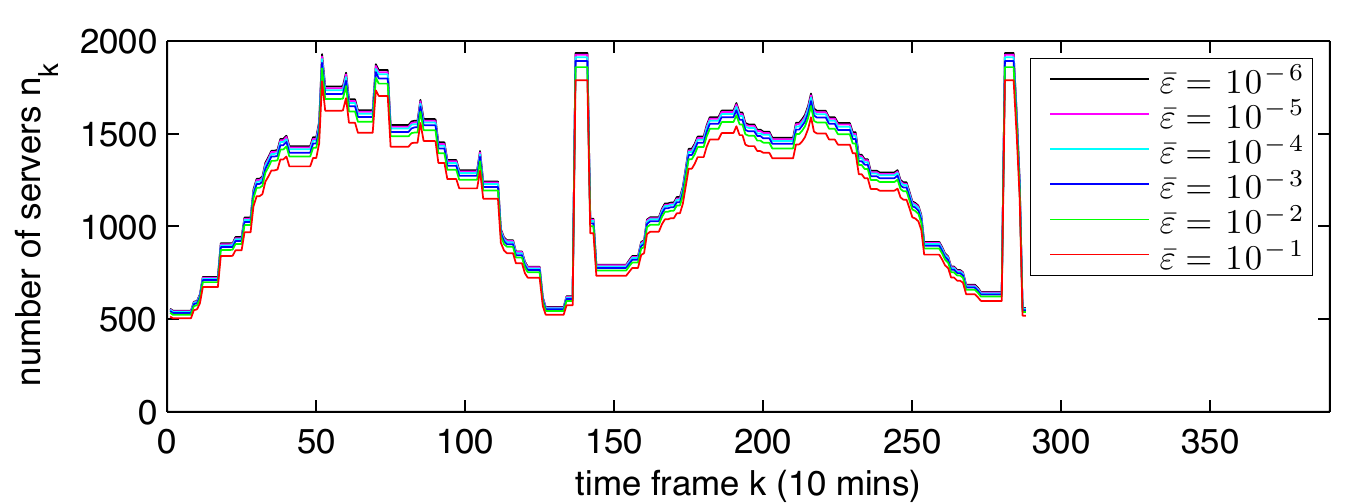}}
\subfigure[MSR]
{\includegraphics[width=0.49\linewidth]{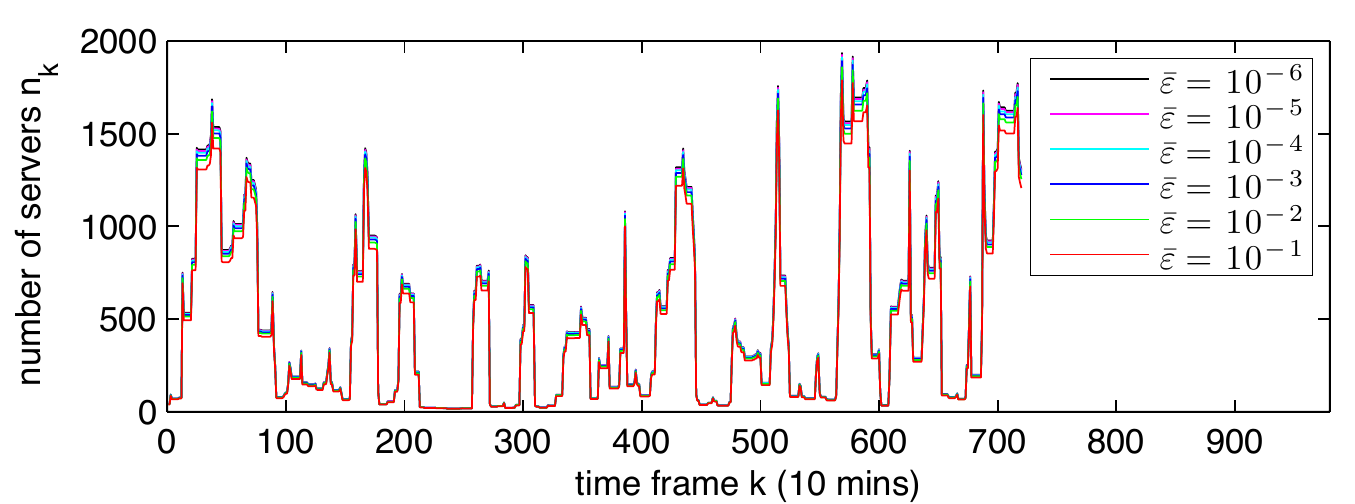}}
\caption{Impact of $\bar{\varepsilon}$ on provisioning $n_k$ for MM arrivals.}
\label{fig:nkvaryepsilonMM}

\subfigure[Hotmail]
{{\includegraphics[width=0.49\linewidth]{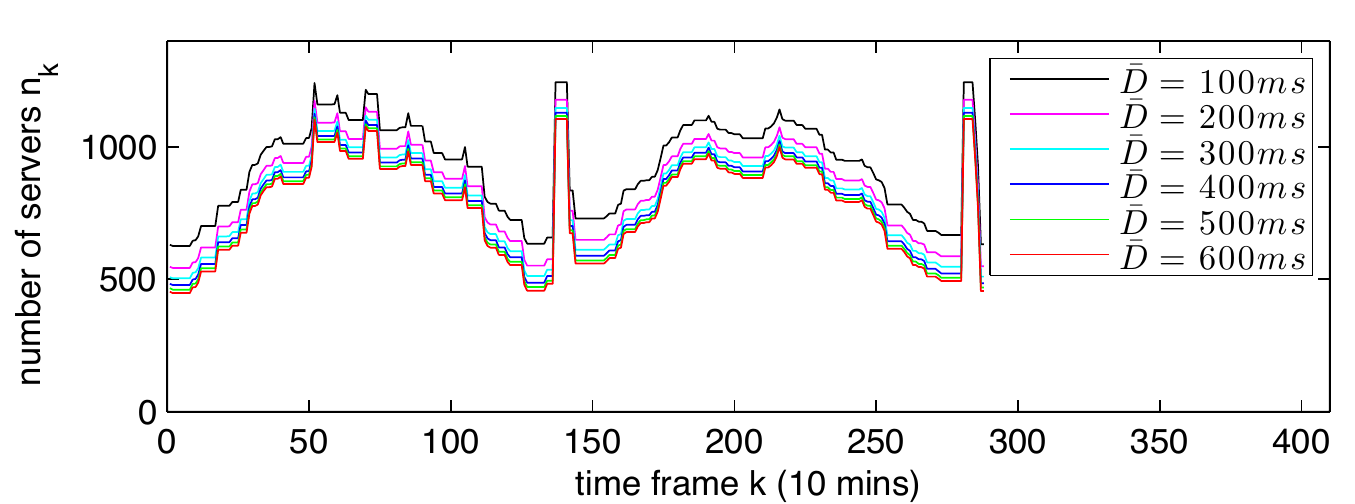}}}
\subfigure[MSR]
{{\includegraphics[width=0.49\linewidth]{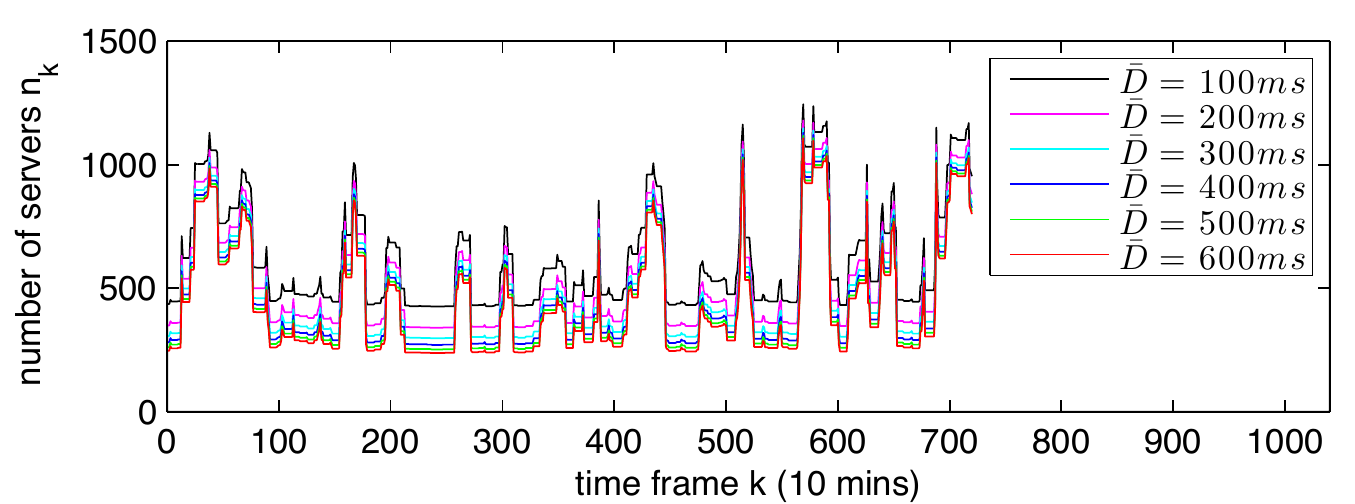}}}
\caption{Impact of $\bar{D}$ on provisioning $n_k$ for heavy tailed arrivals.}
\label{fig:nkvaryDHT}
\end{center}
\end{figure*}

The results for the optimal static provisioning should be taken as an optimistic benchmark for today's data centers, which typically do not use dynamic resizing. We consider the cost incurred by an  optimal static provisioning  scheme that chooses a constant number of servers that minimizes the costs incurred based on full knowledge of the entire workload. This
policy is clearly not possible in practice, but it provides a \emph{very conservative} estimate of the savings from right-sizing since it uses perfect knowledge of all peaks and eliminates the need for overprovisioning in order to handle the possibility of flash crowds or other traffic bursts.

\subsection{Results}

Our experiments are organized to illustrate the impact of a wide variety of parameters on the cost savings attainable via dynamic resizing.  The goal is to understand for which workloads dynamic resizing can provide large enough cost savings to warrant the extra implementation complexity.  Remember, our setup is designed so that the cost savings illustrated is a conservative estimate of the true cost savings provided by dynamic resizing.

\subsubsection*{The Role of Burstiness}

A key goal of our model is to expose the impact of burstiness on
dynamic resizing, and so we start by focusing on that parameter.
Recall that we can vary burstiness in both the light-tailed and
heavy-tailed settings using $T$ for MM arrivals and $\alpha$ for
heavy-tailed arrivals.

\textbf{The impact of burstiness on provisioning:} A priori, one may expect that burstiness can be beneficial for dynamic resizing, since it indicates that there are periods of low load during which energy may be saved.  However, this is not actually true since resizing decisions must be made at the slow time-scale while burstiness is a characteristic of the fast time-scale.  Thus, burstiness is actually detrimental for dynamic resizing, since it means that the provisioning decisions made on the slow time-scale must be made with the bursts in mind, which results in a larger number of servers needed to be provisioned for the same average workload. This effect can be seen in Figures \ref{fig:nkvaryalpha} and \ref{fig:nkvaryT}, which show the optimal dynamic provisioning as $\alpha$ and $T$ vary.  Recall that burstiness increases as $\alpha$ decreases and $T$ increases.

\textbf{The impact of burstiness on cost savings:}  The larger provisioning created by increased burstiness manifests itself in the cost savings attainable through dynamic capacity provisioning as well.  This is illustrated in Figure \ref{fig:savingsvaryburstiness}, which shows the cost savings of the optimal dynamic provisioning as compared to the optimal static provisioning for varying $\alpha$ and $T$ as a function of the switching cost $\beta$.

\textbf{The impact of burstiness on LCP:} Interestingly, though Figure \ref{fig:savingsvaryburstiness} shows that the potential of dynamic resizing is limited by increased burstiness, it turns out that the relative performance of LCP is not hurt by burstiness.  This is illustrated in Figure \ref{fig:LCPsvaryburstiness}, which shows the percent of the optimal cost savings that LCP achieves. Importantly, it is nearly perfectly flat as the burstiness is varied.

\subsubsection*{The Role of the Peak-to-Mean Ratio}

The impact of the peak-to-mean ratio on the potential benefits of dynamic resizing is quite intuitive:  if the peak-to-mean ratio is high, then there is more opportunity to benefit from dynamically changing capacity.  Figure \ref{fig:savingsvarypeaktomean} illustrates this well-known effect. The workload for the figure is
generated from the traces by scaling $\lambda_{k}$ as
$\hat{\lambda}_{k}=c(\lambda_{k})^{\gamma}$, varying $\gamma$ and
adjusting $c$ to keep the mean constant.

In addition to illustrating that a higher peak-to-mean ratio makes dynamic resizing more valuable, Figure \ref{fig:savingsvarypeaktomean} also highlights that there is a strong interaction between burstiness and the peak-to-mean ratio, where if there is significant burstiness the benefits that come from a high peak-to-mean ratio may be diminished considerably.

\subsubsection*{The Role of the SLA}

The SLA plays a key role in the provisioning of a data center.  Here, we show that the SLA can also have a strong impact on whether dynamic resizing is valuable, and that this impact depends on the workload.  Recall that in our model the SLA consists of a violation probability $\bar{\varepsilon}$ and a delay bound $\bar{D}$.  We deal with each of these in turn.

Figures \ref{fig:nkvaryepsilonHT} and \ref{fig:nkvaryepsilonMM} highlight the role the violation probability $\bar{\varepsilon}$ has on the provisioning of $n_k$ under the optimal dynamic resizing in the cases of heavy-tailed and MM arrivals.  Interestingly, we see that there is a significant difference in the impact of $\bar{\varepsilon}$ depending on the arrival process.  As $\bar{\varepsilon}$ gets smaller in the heavy-tailed case the provisioning gets significantly flatter, until there is almost no change in $n_k$ over time.  In contrast, no such behavior occurs in the MM case and, in fact, the impact of $\bar{\varepsilon}$ is quite small.  This difference is a fundamental effect of the ``heaviness'' of the tail of the arrivals, i.e., a heavy tail requires significantly more capacity in order to counter a drop in $\bar{\varepsilon}$.

This contrast between heavy- and light-tailed arrivals is also evident in Figure \ref{fig:savingsvaryepsilon}, which highlights the cost savings from dynamic resizing in each case as a function of $\bar{\varepsilon}$.  Interestingly, the cost savings under light-tailed arrivals is largely independent of $\bar{\varepsilon}$, while under heavy-tailed arrivals the cost savings is monotonically increasing with $\bar{\varepsilon}$.

The second component of the SLA is the delay bound $\bar{D}$. The impact of $\bar{D}$ on provisioning is much less dramatic.  We show an example in the case of heavy-tailed arrivals in Figure \ref{fig:nkvaryDHT}.  Not surprisingly, the provisioning increases as $\bar{D}$ drops.  However, the flattening observed as a result of $\bar{\varepsilon}$ is not observed here.  The case of MM arrivals is qualitatively the same, and so we do not include it.

\subsubsection*{When is Dynamic Resizing Valuable?}

Now, we are finally ready to address the question of when (i.e., for what workloads) is dynamic resizing valuable.
To address this question, we must look at the interaction between the peak-to-mean ratio and the burstiness.  Our goal is to provide a concrete understanding of for which (peak-to-mean, burstiness, SLA) settings the potential savings from dynamic resizing is large enough to warrant implementation.  Figures \ref{fig:whenvaluablebysavings}--\ref{fig:whenvaluablebyepsilon} focus on this question.  Our hope is that these figures highlight that a precursor to any debate about the value of dynamic resizing must be a joint understanding of the expected workload characteristics and the desired SLA, since for any fixed choices of two of these parameters (peak-to-mean, burstiness, SLA), the third can be chosen so that dynamic resizing does or does not provide significant cost savings for the data center.

Starting with Figure \ref{fig:whenvaluablebysavings}, we see a set of curves for different levels of cost savings.  The interpretation of the figures is that below (above) each curve the savings from optimal dynamic resizing is smaller (larger) than the specified value for the curve.  Thus, for example, if the peak-to-mean ratio is 2 in the Hotmail trace, a 10\% cost savings is possible for all levels of burstiness, but a 30\% cost savings is only possible for $\alpha>1.5$.  However, if the peak-to-mean ratio is 3, then a 30\% cost savings is possible for all levels of burstiness.  It is difficult to say what peak-to-mean and burstiness settings are ``common'' for data centers, but as a point of reference, one might expect large-scale services to have a peak-to-mean ratio similar to that of the Hotmail trace, i.e., around 1.5-2.5; and smaller scale services to have peak-to-mean ratios similar to that of the MSR trace, i.e., around 4-6.  The burstiness also can vary widely, but as a rough estimate, one might expect $\alpha$ to be around 1.4-1.6.

\begin{figure}[t]
\begin{center}
\subfigure[Hotmail]
{\includegraphics[width=0.35\linewidth]{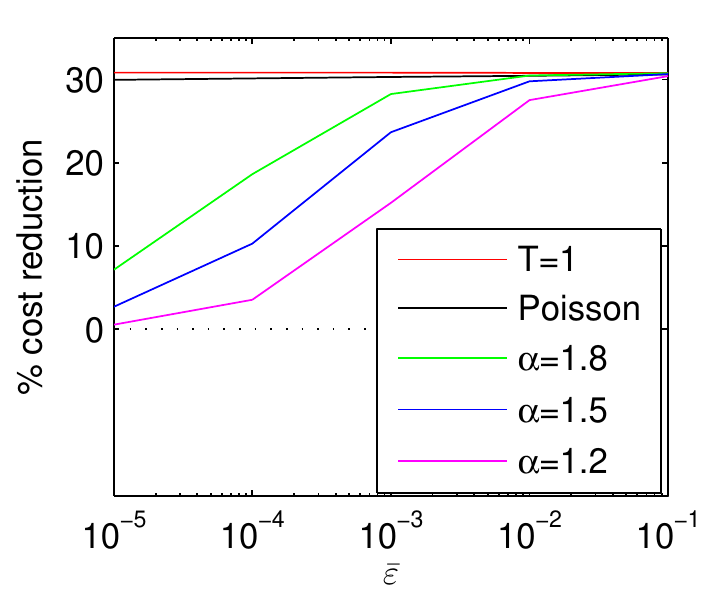}}
\subfigure[MSR]
{\includegraphics[width=0.35\linewidth]{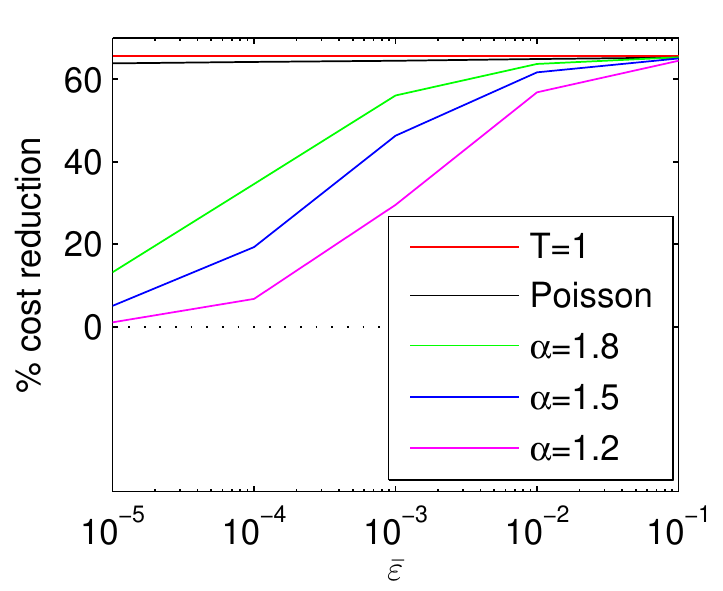}}
\caption{Impact of $\bar{\varepsilon}$ on the cost savings of dynamic resizing.}
\label{fig:savingsvaryepsilon}
\end{center}
\end{figure}

\begin{figure}[t]
\begin{center}
\subfigure[Hotmail]
{\includegraphics[width=0.35\linewidth]{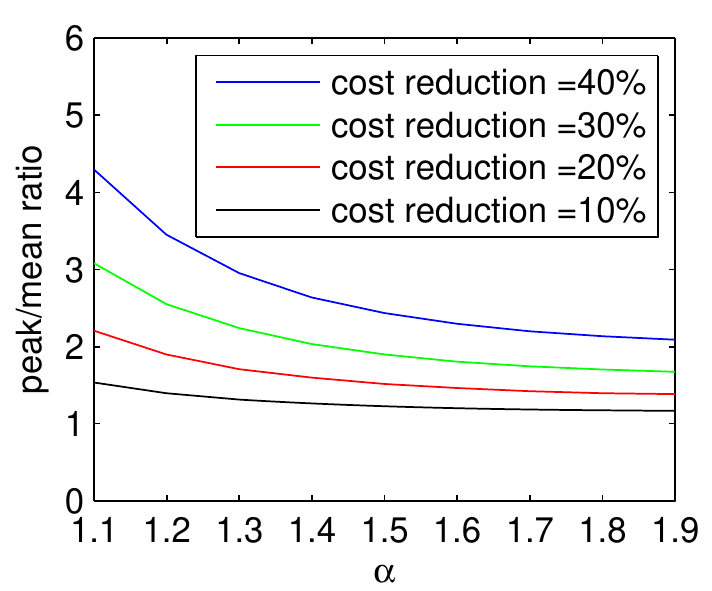}}
\subfigure[MSR]
{\includegraphics[width=0.35\linewidth]{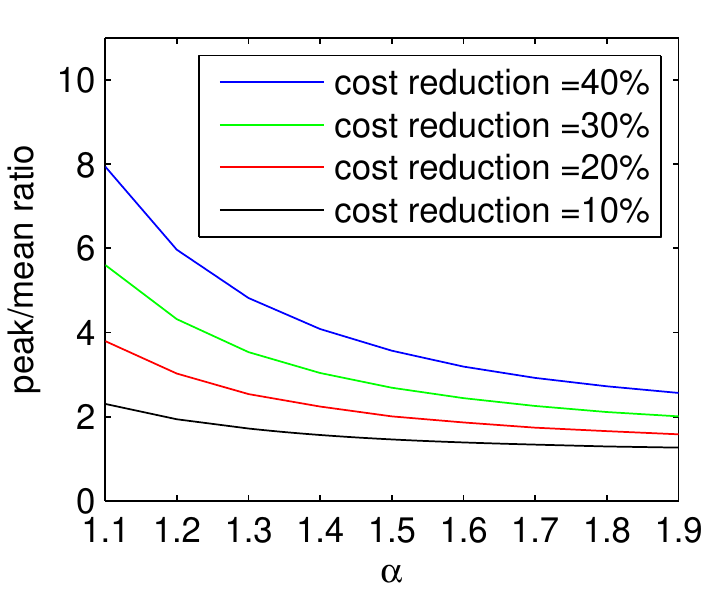}}
\caption{Characterization of burstiness and peak-to-mean ratio necessary for dynamic resizing to achieve different levels of cost reduction.}
\label{fig:whenvaluablebysavings}
\end{center}
\end{figure}

Of course, many of the settings of the data center will effect the conclusions illustrated in Figure \ref{fig:whenvaluablebysavings}.  Two of the most important factors to understand the effects of are the switching cost, $\beta$, and the SLA, particularly $\bar{\varepsilon}$.

Figure \ref{fig:whenvaluablebybeta} highlights the impact of the magnitude of the switching costs on the value of dynamic resizing.  The curves represent the threshold on peak-to-mean ratio and burstiness necessary to obtain 20\% cost savings from dynamic resizing.  As the switching costs increase, the workload must have a larger peak-to-mean ratio and/or less burstiness in order for dynamic resizing to be valuable.  This is not unexpected.  However, what is perhaps surprising is the small impact played by the switching cost.  The class of workloads where dynamic resizing is valuable only shrinks slightly as the switching cost is varied from on the order of the cost of running a server for 10 minutes ($\beta=1$) to running a server for 3 hours ($\beta=18$).

\begin{figure}[t]
\begin{center}
\subfigure[Hotmail]
{\includegraphics[width=0.35\linewidth]{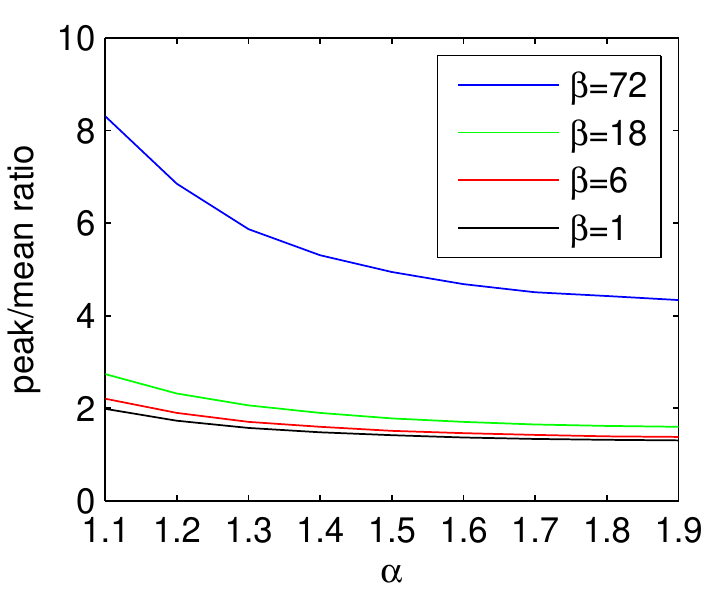}}
\subfigure[MSR]
{\includegraphics[width=0.35\linewidth]{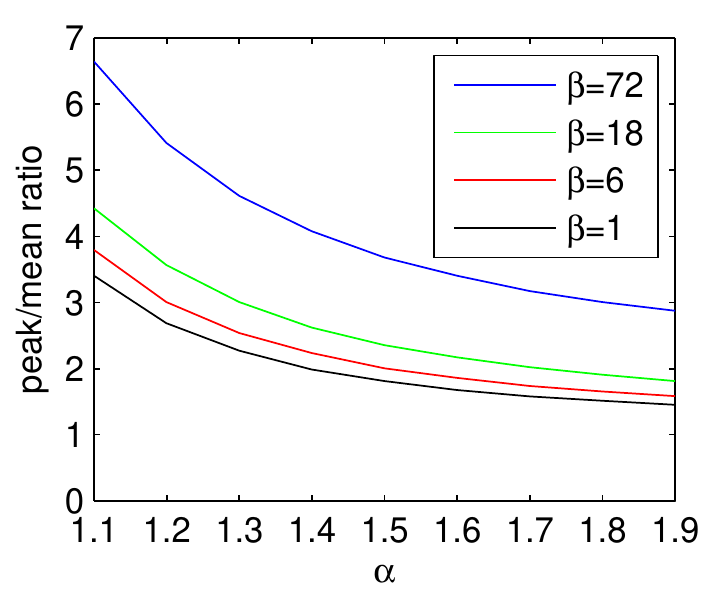}}
\caption{Characterization of burstiness and peak-to-mean ratio necessary for dynamic resizing to achieve 20\% cost reduction as a function of the switching cost, $\beta$.}
\label{fig:whenvaluablebybeta}
\end{center}
\end{figure}

Interestingly, while the impact of the switching costs on the value of dynamic resizing is small, the impact of the SLA is quite large.  In particular, the violation probability $\bar{\varepsilon}$ can dramatically affect whether dynamic resizing is valuable or not.  This is shown in Figure \ref{fig:whenvaluablebyepsilon}, on which the curves represent the threshold on peak-to-mean ratio and burstiness necessary to obtain 20\% cost savings from dynamic resizing.  We see that, as the violation probability is allowed to be larger, the impact of the peak-to-mean ratio on the potential of savings from dynamic resizing disappears; and the value of dynamic resizing starts to depend almost entirely on the burstiness of the arrival process.  The reason for this can be observed in Figure \ref{fig:nkvaryepsilonHT}, which highlights that the optimal provisioning $n_k$ becomes nearly flat as $\bar{\varepsilon}$ increases.

\begin{figure}[t]
\begin{center}
\subfigure[Hotmail]
{\includegraphics[width=0.35\linewidth]{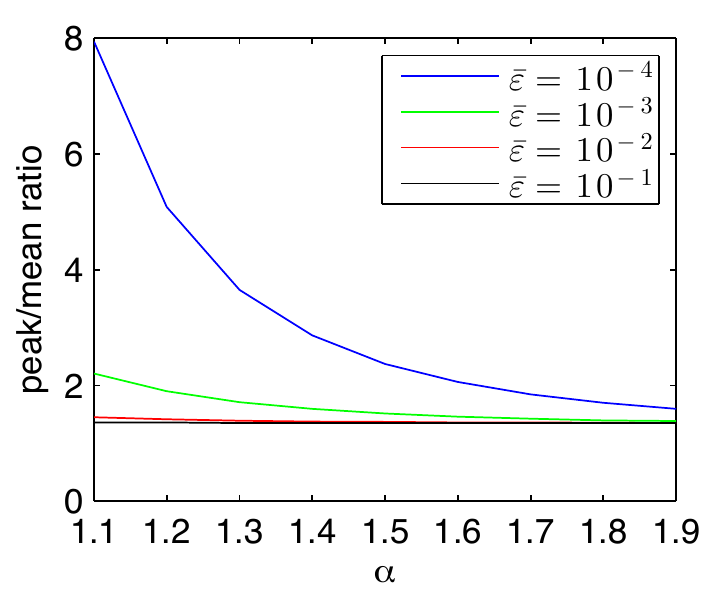}}
\subfigure[MSR]
{\includegraphics[width=0.35\linewidth]{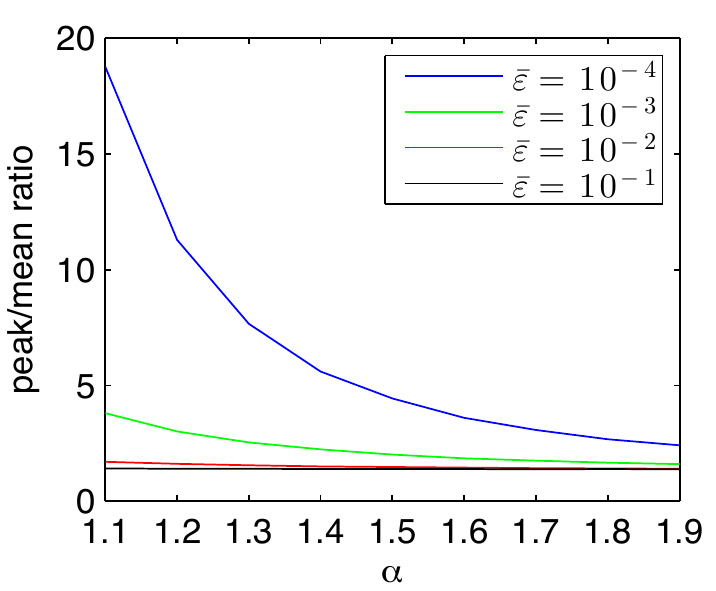}}
\caption{Characterization of burstiness and peak-to-mean ratio necessary for dynamic resizing to achieve 20\% cost reduction as a function of the SLA, $\bar{\varepsilon}$.}
\label{fig:whenvaluablebyepsilon}
\end{center}
\end{figure}

\subsubsection*{Supporting Analytic Results}

To this point we have focused on numerical simulations, and further we provide analytic support for the behavior we observed in the experiments above.  In particular, the following two theorems characterize the impact of burstiness and the SLA $(\bar{D},\bar{\eps})$ on the value of dynamic resizing under Poisson and heavy-tailed arrivals.  This is accomplished by deriving the effect of these parameters on $C(\bar{D},\bar{\eps})$, which constrains the optimal provisioning $n_k$.  A smaller (larger) $C(\bar{D},\bar{\eps})$ implies a smaller (larger) provisioning $n_k$, which in turn implies smaller (larger) costs.

We start providing a result for the case of Poisson arrivals.  The proof is given in Section \ref{s.proofs}.

\begin{theorem} \label{t.ppscaling}
The service capacity constraint from Eq.~(\ref{eq:CPoisson})
increases as the delay constraint $\bar{D}$ or the violation
probability $\bar{\eps}$ decrease. It also satisfies the scaling
law
\begin{equation*}
C(\bar{D},\bar{\eps})=\Theta\left(\frac{\bar{D}^{-1}\log{\bar{\eps}^{-1}}}{\log{(\bar{D}^{-1}\log{\bar{\eps}^{-1}})}}\right),
\end{equation*}
as $\bar{D}^{-1}\log{\bar{\eps}^{-1}}\rightarrow\infty$.
\end{theorem}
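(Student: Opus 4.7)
The plan is to work directly from the explicit closed form $C(\bar{D},\bar{\eps})=\frac{K\lambda}{\log(1+K)}$ with $K=\frac{\log\bar{\eps}^{-1}}{\lambda\bar{D}}$ that was already derived in Eq.~(\ref{eq:CPoisson}). Both parts of the theorem then reduce to elementary one-variable calculus on the auxiliary function $g(K):=\frac{K}{\log(1+K)}$ for $K>0$, composed with the dependence of $K$ on $(\bar{D},\bar{\eps})$.

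For the monotonicity claim, first I would observe that $K$ is strictly decreasing in both $\bar{D}$ and $\bar{\eps}$ (immediate from the formula, since $\log\bar{\eps}^{-1}>0$ and $\lambda>0$). It therefore suffices to show that $g(K)$ is strictly increasing on $(0,\infty)$. I would compute
\begin{equation*}
g'(K)=\frac{\log(1+K)-\frac{K}{1+K}}{\left(\log(1+K)\right)^{2}}
\end{equation*}
and prove the numerator $h(K):=\log(1+K)-\frac{K}{1+K}$ is positive for $K>0$ by noting $h(0)=0$ and $h'(K)=\frac{1}{1+K}-\frac{1}{(1+K)^{2}}=\frac{K}{(1+K)^{2}}>0$. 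Composing with $K\mapsto g(K)\lambda$ and the monotonicity of $K$ in $\bar{D},\bar{\eps}$ yields the first assertion.

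For the scaling law, I would set $x:=\bar{D}^{-1}\log\bar{\eps}^{-1}=\lambda K$ and rewrite the explicit formula as $C(\bar{D},\bar{\eps})=\frac{x}{\log(1+x/\lambda)}$. With $\lambda$ fixed and $x\to\infty$ (equivalently $K\to\infty$), I would use $\log(1+x/\lambda)=\log x-\log\lambda+o(1)$ and $\log(1+K)=\log K+o(1)$, so that both $\frac{x}{\log(1+x/\lambda)}$ and $\frac{x}{\log x}$ are asymptotic to $\frac{\lambda K}{\log K}$. The ratio $\frac{C(\bar{D},\bar{\eps})}{x/\log x}$ therefore converges to $1$, which is stronger than the claimed $\Theta(\cdot)$ bound.

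There is no real obstacle here: once the closed form from Eq.~(\ref{eq:CPoisson}) is in hand, both parts are calculus exercises. The only mildly delicate point is ensuring that the asymptotic in the denominator uses $\log(\bar{D}^{-1}\log\bar{\eps}^{-1})$ rather than $\log(1+K)$, which is handled by absorbing the additive $\log\lambda$ term into the $\Theta$ notation as above.
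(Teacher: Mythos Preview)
Your proposal is correct and follows essentially the same approach as the paper: both reduce everything to the one-variable function $g(K)=K/\log(1+K)$ and the observation that $\log(1+K/\lambda)=\Theta(\log K)$ as $K\to\infty$. The paper phrases the monotonicity step via the auxiliary function $(1+x)^{1/x}$ (noting that $g(x)=1/\log\bigl((1+x)^{1/x}\bigr)$), whereas you differentiate $g$ directly; your version is slightly more explicit but the content is the same.
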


This theorem highlights that as $\bar{\eps}$ decreases and/or $\bar{D}$ decreases $C(\bar{D},\bar{\eps})$, and thus the cost of the optimal provisioning, increases.  This shows that the observations made in our numeric experiments hold more generally.  Perhaps the most interesting point about this theorem, however, is the contrast of the growth rate with that in the case of heavy-tailed arrivals, which is summarized in the following theorem.  The proof is given in Section \ref{s.proofs}.

\begin{theorem} \label{t.htscaling}
The implicit solution for the capacity constraint from
Eq.~(\ref{eq:solHT}) increases as the delay constraint $\bar{D}$
or the violation probability $\bar{\eps}$ decrease, or the value
of $\alpha$ decreases. It also satisfies the scaling law
\begin{equation}
C(\bar{D},\bar{\eps})=\Theta\left(\left(\frac{1}{\bar{\eps}\bar{D}^{\alpha-1}}\right)^{\frac{1}{\alpha}}\right)\label{eq:htssLaws}
\end{equation}
as $\bar{\eps}\bar{D}^{\alpha-1}\rightarrow 0$
 for any given $\alpha\in(1,2)$.
\end{theorem}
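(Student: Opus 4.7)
The plan is to convert the implicit equation~(\ref{eq:solHT}) into an inversion problem $G(C)=\bar\eps\bar D^{\alpha-1}$ and then to sandwich $G(C)$ between two positive constants times $C^{-\alpha}$, so that the scaling law follows by inverting the sandwich.

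\emph{Step 1 (reformulation and monotonicity in $\bar D,\bar\eps$).} Denote the LHS of~(\ref{eq:solHT}) by $G(C)$. Using $\log\gamma^{(\alpha-1)/\alpha}=\frac{\alpha-1}{\alpha}\log\gamma$ one obtains
\[
G(C)=\frac{\alpha}{\alpha-1}\inf_{1<\gamma<C/\lambda}\frac{\gamma^{(2\alpha-1)/\alpha}}{C^{\alpha-1}(C-\gamma\lambda)\log\gamma}.
\]
For each fixed feasible $\gamma$ the argument of the infimum is strictly decreasing in $C$, and the feasible domain $(1,C/\lambda)$ expands with $C$, so $G$ is strictly decreasing in $C$. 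Since $\alpha>1$ makes $\bar D^{\alpha-1}$ increasing in $\bar D$, a decrease in either $\bar\eps$ or $\bar D$ forces the unique root $C^{*}=G^{-1}(\bar\eps\bar D^{\alpha-1})$ to increase. Monotonicity in $\alpha$ requires comparing both sides of $K_{\alpha}(C)(C\bar D)^{1-\alpha}=\bar\eps$ as $\alpha$ varies; I would fix $C$ and $\bar D$ and differentiate in $\alpha$ to verify that the LHS strictly increases as $\alpha$ decreases, which by the decreasing character of $G(\cdot)$ in $C$ forces $C^{*}$ to increase. This is the most delicate monotonicity and I expect it to be the main bookkeeping obstacle.

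\emph{Step 2 (matching bounds on $G$).} For the upper bound, use the single test point $\gamma^{*}=e^{\alpha/(2\alpha-1)}$, chosen to minimize the $\gamma$-dependent constant. For $C>2\gamma^{*}\lambda$ the denominator satisfies $C-\gamma^{*}\lambda>C/2$, so
\[
G(C)\le \frac{2\alpha}{\alpha-1}\cdot\frac{\gamma^{*(2\alpha-1)/\alpha}}{C^{\alpha}\log\gamma^{*}}=\frac{2e(2\alpha-1)}{(\alpha-1)C^{\alpha}}.
\]
For the lower bound, replace $C-\gamma\lambda$ by $C$ inside the infimum (which only decreases it), decoupling $\gamma$ from $C$. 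A standard calculus argument on $f(\gamma)=\gamma^{p}/\log\gamma$ with $p=(2\alpha-1)/\alpha$ shows $f$ attains its minimum on $(1,\infty)$ at $\gamma=e^{1/p}$ with value $pe$, whence
\[
G(C)\ge\frac{\alpha\cdot pe}{(\alpha-1)C^{\alpha}}=\frac{e(2\alpha-1)}{(\alpha-1)C^{\alpha}}.
\]
These two bounds coincide up to a factor of~$2$, so the sandwich is tight.

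\emph{Step 3 (inversion).} Substituting the bounds into $G(C^{*})=\bar\eps\bar D^{\alpha-1}$ and solving for $C^{*}$ yields
\[
\left(\tfrac{c_{2}(\alpha)}{\bar\eps\bar D^{\alpha-1}}\right)^{1/\alpha}\le C^{*}\le\left(\tfrac{c_{1}(\alpha)}{\bar\eps\bar D^{\alpha-1}}\right)^{1/\alpha}
\]
with $c_{1}(\alpha)=2c_{2}(\alpha)=2e(2\alpha-1)/(\alpha-1)$, which is exactly the scaling law~(\ref{eq:htssLaws}) as $\bar\eps\bar D^{\alpha-1}\to 0$. The scaling law is therefore comparatively routine once the correct test point $\gamma^{*}$ is identified and the one-variable calculus on $\gamma^{p}/\log\gamma$ is invoked; the real difficulty, as noted, lies in the monotonicity-in-$\alpha$ bookkeeping, where both sides of the implicit equation depend on $\alpha$ simultaneously.
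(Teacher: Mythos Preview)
Your scaling-law argument is correct and follows the same sandwich strategy as the paper, though with a cleaner execution. The paper obtains the upper bound on $C$ by fixing a reference solution $C_0(\bar D_0,\bar\eps_0)$, setting $\gamma_0=(C_0+\lambda)/(2\lambda)$, and then bounding $G$ at that single $\gamma_0$; you instead plug in the natural minimizer $\gamma^*=e^{\alpha/(2\alpha-1)}$ of $\gamma^p/\log\gamma$, which avoids the auxiliary reference point and gives explicit constants. One small omission: your upper bound $G(C)\le c_1/C^{\alpha}$ is valid only for $C>2\gamma^*\lambda$, so you should first invoke the lower bound (which holds for all $C>\lambda$) to conclude $C^*\to\infty$ as $\bar\eps\bar D^{\alpha-1}\to 0$, and only then apply the upper estimate.

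The genuine gap is the monotonicity in $\alpha$, which you only sketch. Differentiating in $\alpha$ is not routine bookkeeping: $\alpha$ appears on both sides of the implicit equation and inside the infimum through both the exponent $(\alpha-1)/\alpha$ and the factor $\log\gamma^{(\alpha-1)/\alpha}$, and the optimal $\gamma$ itself moves with $\alpha$. The paper does not differentiate; instead it proves three structural facts and combines them: (i) $C\bar D\ge 1$, obtained from the crude relaxation $G(C)\ge 1/C^{\alpha}$ together with a units assumption $\bar D\ge 1$; (ii) the optimal $\gamma$ in the infimum satisfies $\gamma<e$, because the auxiliary function $\gamma^{a+1}/(a\log\gamma)$ is increasing on $[e,\infty)$; and (iii) for $\gamma<e$ the map $a\mapsto\gamma^{a}/(a\log\gamma)$, with $a=(\alpha-1)/\alpha$, is non-increasing. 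Facts (ii)--(iii) make the LHS of Eq.~(\ref{eq:solHT}) (at fixed $C$) non-increasing in $\alpha$, and fact (i) makes $(C\bar D)^{1-\alpha}$ non-increasing in $\alpha$ as well; together these force $C$ to be non-increasing in $\alpha$. Any differentiation approach would ultimately need (ii) to control the sign, so it is unlikely to be simpler; as written, this step in your proposal is incomplete.
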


A key observation about this theorem is that the growth rate of $C(\bar{D},\bar{\eps})$ with $\bar{\eps}$ is much faster than in the case of the Poisson (polynomial instead of logarithmic).  This supports what is observed in Figure \ref{fig:savingsvaryepsilon}. Additionally, Theorem \ref{t.htscaling} highlights the impact of burstiness, $\alpha$, and shows that the behavior we have seen in our experiments holds more generally.



\section{Proofs}
\label{s.proofs}

In this section, we collect the proofs for the results in previous sections.

We start with the proof of Lemma \ref{aggregate}, the aggregation property used to model the multiserver system with a single service process.

\begin{proof}[Proof of Lemma \ref{aggregate}]
Fix $t\geq1$. Because each server $i$ has a constant rate capacity
$\mu$, it follows that the bivariate processes
$S_i(s,t)=\mu(t-s)$ are service processes for the individual
servers~(see~\cite{Book-Chang}, pp. 167), i.e.,
\begin{eqnarray*}
R_i(t)&\geq&\inf_{0\leq s\leq t}\left\{A_i(s)+\mu(t-s)\right\}\\
&=&\frac{1}{n}\inf_{0\leq s\leq t}\left\{A(s)+n\mu(t-s)\right\}~,
\end{eqnarray*}
where $R_i(t)$ is the departure process from server $i$. In the
last line we used the load-balancing dispatching assumption, i.e.,
$A_i(s)=\frac{1}{n}A(s)$. Adding the terms for $i=1,\dots,n$ it
immediately follows that
\begin{center}
$\sum_{i} R_i(t)\geq\inf_{0\leq s\leq t}\left\{A(s)+n\mu(t-s)\right\}$,
\end{center}
which shows that the bivariate process $S(s,t)=n\mu(t-s)$ is a
service process for the virtual system with arrival process
$A(t)=\sum_i A_i(t)$ and departure process $R(t)=\sum_i R_i(t)$.
\end{proof}

Next, we prove the bound used for Poisson arrival processes, i.e.,
Eq. (\ref{eq:dbMM}).

\begin{proof}[Proof of Proposition~\ref{t.Poisson}]
The proof follows closely a technique from~\cite{Kingman64} for
the analysis of GI/GI/1 queues. Denote for convenience
$C=C(\bar{D},\bar{\eps})$. Fix $t\geq 1$ and introduce the
following process for all $0\leq s\leq t-\bar{D}$:
\begin{eqnarray*}
T(s)=e^{\theta^*\left(A(t-\bar{D}-s,t-\bar{D})-Cs\right)}.
\end{eqnarray*}
Consider also the filtration of
$\sigma$-algebras
\begin{eqnarray*}
{\mathcal{F}}_s=\sigma\left\{A(t-\bar{D}-s,t-\bar{D})\right\},
\end{eqnarray*}
i.e., ${\mathcal{F}}_s\subseteq{\mathcal{F}}_{s+u}$ for all $0\leq
s\leq s+u\leq t-\bar{D}$. Note that $T(s)$ is
${\mathcal{F}}_s$-measurable for all $s$ (see~\cite{Book-Revuz99},
pp.~79). Then we can write for the conditional expectations for
all $s,u\geq0$ with $s+u\leq t-\bar{D}$
\begin{eqnarray*}
&&\hspace{-0.7cm}E\left[T(s+u)\parallel{\mathcal{F}}_s\right]\\
&&\qquad=E\left[T(s)e^{\theta^*\left(A(t-\bar{D}-s-u,t-\bar{D}-s)-Cu\right)}\parallel{\mathcal{F}}_s\right]\\
&&\qquad=T(s)E\left[e^{\theta^*\left(A(t-\bar{D}-s-u,t-\bar{D}-s)-Cu\right)}\parallel{\mathcal{F}}_s\right]\\
&&\qquad=T(s)E\left[e^{\theta^*\left(A(t-\bar{D}-s-u,t-\bar{D}-s)-Cu\right)}\right]\\
&&\qquad=T(s)e^{\theta^*\left(\frac{\lambda}{\theta^*}\left(e^{\theta^*}-1\right)-C\right)u}\\
&&\qquad\leq T(s)~.
\end{eqnarray*}
In the second line we used that $T(s)$ is
${\mathcal{F}}_s$-measurable, and then we used the independent
increments property of the Poisson process $A(t)$, i.e.,
$A(t-\bar{D}-s-u,t-\bar{D}-s)$ is independent of
${\mathcal{F}}_s$. Then we computed the moment generating function
for the Poisson process, and finally we used the property of
$\theta^*$ from Eq.~(\ref{eq:thetaStar}). Therefore, the process
$T(s)$ is a supermartingale, i.e.,
\begin{eqnarray*}
E\left[T(s+u)\parallel{\mathcal{F}}_s\right]\leq T(s).
\end{eqnarray*}

We can now continue Eq.~(\ref{eq:db0}) as follows
\begin{eqnarray*}
&&\hspace{-0.5cm}\P\left(D(t)>\bar{D}\right)\\
&&\leq\P\left(\sup_{0\leq s\leq
t-\Bar{D}}\left\{A\left(s,t-\Bar{D}\right)-C(t-\bar{D}-s)\right\}>C\bar{D}\right)\\
&&\leq\P\left(\sup_{0\leq s\leq t-\Bar{D}}T(s)>e^{\theta^*C\bar{D}}\right)\\
&&\leq e^{-\theta^*C\bar{D}}~,
\end{eqnarray*}
which proves Eq.~(\ref{eq:dbMM}). Note that in the last line we
used a maximal inequality for the (continuous) supermartingale
$T(s)$ (see~\cite{Book-Revuz99}, pp.~54).
\end{proof}

Finally, we prove the monotonicity and scaling results in Theorems
\ref{t.ppscaling} and \ref{t.htscaling}.

\begin{proof}[Proof of Theorem \ref{t.ppscaling}]
First, note that the monotonicity properties follow immediately
from the fact that the function $f(x)=(1+x)^{\frac{1}{x}}$ is
non-decreasing. Next, to prove the more detailed scaling laws,
simply notice that $\log{(1+c
f(n))}=\Theta\left(\log{f(n)}\right)$ for some non-decreasing
function $f(n)$ and a constant $c>0$.  The result follows.
\end{proof}

\begin{proof}[Proof of Theorem \ref{t.htscaling}]

We first consider the monotonicity properties and then the scaling law.

\emph{Monotonicity properties:}  To prove the monotonicity results on $\bar{D}$ and $\bar{\eps}$,
observe that the left hand side (LHS) in the implicit equation from
Eq.~(\ref{eq:solHT}) is a non-increasing function in
$C(\bar{D},\bar{\eps})$ because the range of the infimum expands
whereas the function in the infimum decreases, by increasing
$C(\bar{D},\bar{\eps})$.  Moreover, the LHS is unbounded at the
boundary $C(\bar{D},\bar{\eps})=\lambda$. The solution
$C(\bar{D},\bar{\eps})$ is thus non-increasing in both $\bar{D}$
and $\bar{\eps}$.

Next, to prove monotonicity in $\alpha$, fix
$\alpha_1\leq\alpha_2$ and denote by $C_1$ the implicit solution
of Eq.~(\ref{eq:solHT}) for $\alpha=\alpha_1$. In the first step
we prove that $C_1\bar{D}\geq1$. Let $C$ be the solution of
\begin{equation*}
\frac{1}{C^{\alpha_1}}=\bar{\eps}\bar{D}^{\alpha_1-1},
\end{equation*}
where the LHS was obtained by relaxing the LHS of
Eq.~(\ref{eq:solHT}) (we used that $\gamma>1$,
$C-\gamma\lambda<C$, and $x\geq\log{x}$ for all $x\geq 1$).
Consequently, $C_1\geq C$, and by assuming that the units are
properly scaled such that $\bar{D}\geq1$, it follows that
$C\bar{D}\geq1$ and hence $C_1\bar{D}\geq1$.

Secondly, we prove
that $\gamma$, i.e., the optimal value in the solution of $C_1$,
satisfies $\gamma< e$. Consider the function
$f(\gamma)=\frac{\gamma^{a+1}}{a\log{\gamma}}$ with
$a=\frac{\alpha-1}{\alpha}$. If, by contradiction, $\gamma\geq e$,
then $f'(\gamma)> 0$ and consequently $f(\gamma)$ is
increasing on $[e,\infty)$. Since the function
$\frac{1}{C_1-\gamma\lambda}$ is also increasing in $\gamma$,
we get a contradiction that $\gamma$ is the optimal solution as
assumed, and hence $\gamma<e$.

Finally, consider the function
$g(a)=\frac{\gamma^a}{a\log{\gamma}}$ with
$a=\frac{\alpha-1}{\alpha}$. The previous property $\gamma<e$
implies that $g'(a)\leq 0$ and further that $g$ is non-increasing
in $a$ and hence in $\alpha$ as well. Since
$\frac{1}{(C_1\bar{D})^{\alpha-1}}$ is also non-increasing in
$\alpha$, we obtain that
\begin{eqnarray*}
&&\hspace{-0.5cm}\inf_{1<\gamma<\frac{C_1}{\lambda}}\left\{\frac{\gamma}{(C_1D)^{\alpha_1-1}\left(C_1-\gamma\lambda\right)}\frac{\gamma^{\frac{\alpha_1-1}{\alpha_1}}}{\log{\gamma^{\frac{\alpha_1-1}{\alpha_1}}}}\right\}\\
&&\geq\inf_{1<\gamma<\frac{C_1}{\lambda}}\left\{\frac{\gamma}{(C_1D)^{\alpha_2-1}\left(C_1-\gamma\lambda\right)}\frac{\gamma^{\frac{\alpha_2-1}{\alpha_2}}}{\log{\gamma^{\frac{\alpha_2-1}{\alpha_2}}}}\right\}~.
\end{eqnarray*}
Using the monotonicity in $C_1$ in the term inside the infimum, it
follows that $C_1\geq C_2$, where $C_2$ is the implicit solution
of Eq.~(\ref{eq:solHT}) for $\alpha=\alpha_1$. Therefore,
$C(\bar{D},\bar{\eps})$ is non-increasing in $\alpha$.

\emph{Scaling law:} To prove the scaling law, denote by
$C$ the implicit solution of the equation
\begin{equation}
\inf_{1<\gamma<\max\left\{\frac{C}{\lambda},e^{\frac{2\alpha}{\alpha-1}}\right\}}\left\{\frac{1}{C^{\alpha}}\frac{\gamma^{\frac{\alpha-1}{\alpha}}}{\log{\gamma^{\frac{\alpha-1}{\alpha}}}}\right\}=\bar{\eps}\bar{D}^{\alpha-1}~.\label{eq:relaxedIE}
\end{equation}
The LHS here was constructed by relaxing the function inside the
infimum in the LHS of Eq.~(\ref{eq:solHT}) and extending the range
of the infimum. This means that the implicit solution $C$ is
smaller than the implicit solution $C(\bar{D},\bar{\eps})$. The
function inside the infimum of the LHS of Eq.~(\ref{eq:relaxedIE})
is convex on the domain of $\gamma$ and attains its infimum at
$\gamma=e^{\frac{\alpha}{\alpha-1}}$. Solving for $C$ and using
that $C(\bar{D},\bar{\eps})\geq C$ proves the lower bound.

To prove the upper bound, let us fix $\alpha$, $\bar{D}_0$ and
$\bar{\eps}_0$, and denote by $C_0(\bar{D}_0,\eps_0)$ the
corresponding implicit solution. Using the monotonicity of the
implicit solution in $\bar{\eps}\bar{D}^{\alpha-1}$, as shown
above, it follows that
\begin{equation*}
C(\bar{D},\bar{\eps})\geq C_0(\bar{D}_0,\bar{\eps}_0)~,
\end{equation*}
where $\bar{\eps}\bar{D}^{\alpha-1}\leq \bar{\eps}_0\bar{D}_0^{\alpha-1}$. Fixing
$\gamma_0=\frac{C_0(\bar{D}_0,\bar{\eps}_0)+\lambda}{2\lambda}$, let
$C$ be the solution of the equation
\begin{equation}
\frac{\gamma_0}{C^{\alpha-1}\left(C-\gamma_0\lambda\right)}\frac{\gamma_0^{\frac{\alpha-1}{\alpha}}}{\log{\gamma_0^{\frac{\alpha-1}{\alpha}}}}=\bar{\eps}\bar{D}^{\alpha-1}~.\label{eq:pSc2}
\end{equation}
Because the range of $\gamma$ in the solution of
$C(\bar{D},\bar{\eps})$ includes $\gamma_0$, it follows that
$C(\bar{D},\bar{\eps})\leq C$. On the other hand, the LHS of
Eq.~(\ref{eq:pSc2}) satisfies
\begin{equation}
\frac{\gamma_0}{C^{\alpha-1}\left(C-\gamma_0\lambda\right)}\frac{\gamma_0^{\frac{\alpha-1}{\alpha}}}{\log{\gamma_0^{\frac{\alpha-1}{\alpha}}}}
\leq
\frac{K_0}{C^{\alpha}}~,\label{eq:pSc3}
\end{equation}
where
$K_0=\frac{\gamma_0C_0(\bar{D}_0,\bar{\eps}_0)}{C_0(\bar{D}_0,\bar{\eps}_0)-\gamma_0\lambda}\frac{\gamma_0^{\frac{\alpha-1}{\alpha}}}{\log{\gamma_0^{\frac{\alpha-1}{\alpha}}}}$.
Here we used that $C\geq C_0(\bar{D}_0,\bar{\eps}_0)$ (note that
we showed before that $C\geq C(\bar{D},\bar{\eps})$ and
$C(\bar{D},\bar{\eps})\geq C_0(\bar{D}_0,\bar{\eps}_0)$). Finally,
combining Eqs.~(\ref{eq:pSc2}) and (\ref{eq:pSc3}) we immediately
get the scaling law
$C={\mathcal{O}}\left(\left(\frac{1}{\bar{\eps}\bar{D}^{\alpha-1}}\right)^{\frac{1}{\alpha}}\right)$,
and since $C(\bar{D},\bar{\eps})\leq C$, the proof is complete.
\end{proof}

\section{Conclusion}
\label{s.conclusion}

Our goal in this paper is to provide new insight into the debate about the potential of dynamic resizing in data centers.  Clearly, there are many facets of this issue relating to the engineering, algorithmic, and reliability challenges involved in dynamic resizing which we have ignored in this paper.  These are all important issues when trying to \emph{realize the potential} of dynamic resizing. But, the point we have made in this paper is that when \emph{quantifying the potential} of dynamic resizing it is of primary importance to understand the joint impact of workload and SLA characteristics.

To make this point, we have presented a new model that, for the first time, captures the impact of SLA characteristics in addition to both slow time-scale non-stationarities in the workload and fast time-scale burstiness in the workload.   This model allows us to provide the first study of dynamic resizing that captures both the stochastic burstiness and diurnal non-stationarities of real workloads. Within this model, we have provided both trace-based numerical case studies and analytical results.  Perhaps most tellingly, our results highlight that even when two of SLA, peak-to-mean ratio, and burstiness are fixed, the other one can be chosen to ensure that there either are or are not significant savings possible via dynamic resizing. Figures \ref{fig:whenvaluablebysavings}-\ref{fig:whenvaluablebyepsilon} illustrate how dependent the potential of dynamic resizing is on these three parameters.  These figures highlight that a precursor to any debate about the value of dynamic resizing must be an understanding of the workload characteristics expected and the SLA desired.  Then, one can begin to discuss whether this potential is obtainable.

Future work on this topic includes providing a more detailed study of how other important factors affect the potential of dynamic resizing, e.g., storage issues, reliability issues, and the availability of renewable energy.  Note that provisioning capacity to take advantage of renewable energy when it is available is an important benefit of dynamic resizing that we have not considered at all in the current paper.

\section*{Acknowledgment}
This research is supported by the NSF grant of China (No. 61303058), the Strategic Priority Research Program of the Chinese Academy of Sciences (No. XDA06010600), the 973 Program of China (No. 2010CB328105), and NSF grant CNS 0846025 and DoE grant DE-EE0002890.





\bibliographystyle{elsarticle-num}
\bibliography{ref}







\end{document}